\newtheorem{theorem}{Theorem}[section]
\newtheorem{corollary}{Corollary}[theorem]
\newtheorem{definition}{Definition}[section]
\begin{document}
% Title portion. Note the short title for running heads
% paper title
% can use linebreaks \\ within to get better formatting as desired
\title{A Reliability Model for Dependent and Distributed MDS Disk Array Units}

\author{Suayb~S.~Arslan,~\IEEEmembership{Member,~IEEE}
\thanks{Suayb S. Arslan  is with the Department
of Computer Engineering, MEF University, Maslak,
Istanbul, Turkey, e-mail:arslans@mef.edu.tr (see http://www.suaybarslan.com/contact.html).}% <-this % stops 
\thanks{This work is supported by both Quantum Corporation, San Jose, CA, USA and The Scientific and Technological Research Council of Turkey under grant number 2232-115C111.}}

% The paper headers
%\markboth{Journal of \LaTeX\ Class Files,~Vol.~6, No.~1, January~2007}%
%{Shell \MakeLowercase{\textit{et al.}}: Bare Demo of IEEEtran.cls for Journals}

% make the title area
\maketitle
\thispagestyle{firstpage}

\begin{abstract}
Archiving and systematic backup of large digital data generates a quick demand for multi-peta byte scale storage systems. As drive capacities continue to grow beyond the
few terabytes range to address the demands of today's cloud, the likelihood of having multiple/simultaneous disk failures become a reality. Among the main factors causing catastrophic system failures, correlated disk failures and the network bandwidth are reported to be the two common source of performance degradation. The emerging trend is to use efficient/sophisticated erasure codes (EC) equipped with multiple parities and efficient repairs in order to meet the reliability/bandwidth requirements. It is known
that mean time to failure and repair  rates reported by the disk manufacturers cannot capture life cycle patterns of distributed storage systems.  In this study, we develop failure models based on generalized Markov chains that can accurately capture correlated performance degradations with multi-parity protection schemes based on modern Maximum Distance Separable (MDS) EC. Furthermore, we use the proposed model in a distributed storage scenario to quantify two example use cases: Primarily, the common sense that adding more parity disks are only meaningful if we have a decent decorrelation between the failure domains of storage systems and the reliability of generic multiple single-dimensional EC protected storage systems.
\end{abstract}

\begin{IEEEkeywords}
Maximum Distance Separability, Markov chains,
Distributed Storage, Mean time to data loss, Erasure coding.
\end{IEEEkeywords}

\IEEEpeerreviewmaketitle

\section{Introduction} \label{Intro}

Increased gap between the capacity  and the input/output data access rates of commercial disks, coupled with the increased appeal for thousands of small component commodity storage units, have lead to the development of disk arrays. However, incorporating such a large volume of disks into the array leads to increased and correlated failure rates, even in some cases worse than that of a single disk \cite{Pinheiro}. Large number of installations of such disk arrays result in an overall decreased reliability. For example, it is well known that the extensions of the Redundant Array of Inexpensive Disks (RAID) \cite{RAID} systems are envisioned to tolerate situations in which two or more disk failures happen due to increased failure rates \cite{Cleversafe}. In case of reconstruction or the so called \emph{repair} process of the failed component disks, excessive read requests for data regeneration might have to be serviced due to the increased capacities and therefore, the recovery process becomes susceptible to incumbent read errors as well as the network failures. This is another reason that the traditional parity-based RAID (e.g. RAID 5 and RAID 6 \cite{Corbett}) systems fail to meet today's reliability requirements for digital data storage.

Rising trend for storing large volumes of data led to improvements on basic RAID (e.g. efficient implementations of RAID 6), forcing manufacturers to add extra parity disks to RAID 5 setting in order to boost the reliability performance of disk arrays. All versions of RAID are typically implemented in hardware and are based on erasure codes with the optimal capacity-recovery property, known as \emph{maximum distance separable} (MDS) constraint. Especially, when the stored data is of small volume and the scale of the storage system is moderate, RAID techniques were found to be excellent options with enough user data protection. While the scale of storage systems expand and the requirements of different applications change over time, reliability and scalability of RAID systems became questionable \cite{Schroeder} which led to some of the research efforts to search for techniques at the disk array level to improve RAID's reliability \cite{Dholakia}.

When the component disks happen to be in the same geographical location, or mounted in the same network storage node, correlated failures become the performance bottleneck. For example, failures within a batch of disks are observed to be strongly correlated \cite{Schroeder}. Disks that belong to the same manufacturer usually go through the same manufacturing process and made of the same type of magnetic and electronic materials. Their similarity does not decrease dramatically even if the manufacturers are different, because the core materials used in the production phase are similar, if not the same. Furthermore, disks that end up in the same box or a network storage node are subject to the same type of environmental conditions. Such environmental conditions affect the overall disk array reliability almost the same way under normal circumstances. Plus, such disks share the same support hardware. Whenever a catastrophic error occurs in the hardware \cite{GIBSON}, it can easily cause multiple and simultaneous disk failures.

\subsection{Related Work}
With the raise of modern erasure codes that allow network-efficient repairs \cite{Dimakis} and minimize the data read times while servicing user data requests degraded reads or data regeneration requests, the time it takes to maintain the system operability, repair the data and the hardware, balance the system with necessary data transmissions have completely transformed the old reliability problem into very hard one to predict. However, for a reliable and optimized system design such predictions are very crucial and necessary. 

In \cite{Molhotara}, the first Markov chain reliability analysis of disk arrays is performed. Following this study, slight generalizations have been made to the basic model \cite{Burkhard}, \cite{Blum}. For instance in \cite{Elerath}, a kind of enhanced reliability modelling is proposed. Later, more realistic failure phenomena are introduced to the model to address accuracy problem such as latent error and bit rod cases \cite{Iliadis}. In \cite{Hafner}, it is shown that with few more generalizations, the basic model can also be used for non-MDS disk arrays. Although \cite{Greenan} disputes with conventional metrics and proposes a new one based on the average data loss, they fail to provide a comprehensive model and closed form expressions that capture the correlated nature of failures. In some previous works such as \cite{Nath}, subtleties regarding correlated failures is considered. However no specific reliability modelling is proposed. On the other hand, studies like \cite{YanLi} considers only little work on quantifying correlated failure problem and providing a framework to minimize its impact. Rather, a model is proposed to calculate the survivability of data objects stored on heterogeneous storage systems.  

To the best of our knowledge, all of these previous reliability models proposed for disk arrays cannot accurately capture some real time phenomena such as common failure dependencies or accurately predicting the lifespan of storage systems protected by modern erasure codes that use the network and computation resources effectively \cite{Rashmi}. Inspired by this observation, we propose a generalized Markov model that can be used to analyze disk failures under such dependent factors. For instance using the proposed model, we are able to validate and accurately quantify an experimental observation that adding more parity only has significant effect on reliability if we have independent disk failure rates which was conventionally identified and compensated by declustering methods \cite{Gang}. Particularly, we shall argue that if disk failure rates depend on the number of previously failed disks, then this argument is no longer true. For a given exponential failure rate growth model, we show that we can exactly quantify the maximum number of parity disks (e.g. 4 or 5) beyond which adding more parity disks has practically no effect on the overall reliability of the system and therefore can be considered to be a waste of resources. On the other hand, the proposed model will be shown to be useful for estimating the reliability of disk arrays which are protected by modern and sophisticated erasure coding schemes such as pyramid codes \cite{Huang}. The ability to incorporate new metrics such as repair bandwidth, average read overhead, etc. in to the model is deemed to be very important for the reliability estimation of the next generation distributed and networked storage systems.

Most of the MDS erasure correcting codes are applied to a series of disks, constituting MDS disk array schemes. In general, such a class of MDS-based protection schemes are considered as $t$ dimensional ($t$-D) MDS disk protection schemes to create a more robust system against disk failures. For example, a special case of such class of MDS-based protection schemes is considered in \cite{GIBSON}. We mostly focus on one dimensional MDS protection schemes in this study, however it can be shown that the model can be used to derive closed form expressions for multi dimensional MDS disk arrays \cite{Arslan} and code structures \cite{Arslan1}. Furthermore, although it is out of scope of this paper, we can show that the general model proposed in this study can be used to predict reliability for non-MDS codes  \cite{Lee, Arslan2}, array BP-XOR codes\cite{Arslan3, Wang2013} as well as repair--efficient codes \cite{Li2015},\cite{Park2018} under novel opportunistic repair mechanisms \cite{Aggarwal2014}.

More recently, information dispersal has gained attraction due to its reliable operation compared to conventional disk arrays \cite{Cleversafe}. For a given distributed information scenario, we also consider in this study a network storage system consisting of few nodes and two commonly employed disk allocation strategies (\textit{horizontal} and \textit{vertical}) for stripped $1$-D MDS-protected disk arrays. We analytically evaluate their reliability based on the proposed general Markov model and argue that given a correlated disk failure growth model, information dispersal achieves more reliable data protection. Alternatively, we also argue that for a given reliability target, it leads to less use of redundant disks, particularly using vertical allocation. The latter ultimately means considerable amount of savings in terms of resources without compromising the target reliability. Finally,  we remark that allocation is a critical part of the information dispersal paradigm and we only considered two straightforward methods in this study. We anticipate to consider the reliability analysis of more advanced data allocation strategies as our future work.

\subsection{Organization}

The rest of the paper is organized as follows. In section \ref{MDSEPG}, we provide brief information about MDS disk array systems and the concept of Error Protection Groups (EPGs). In Section \ref{GMFM}, we introduce the general failure model based on Markov chains along with a low complexity scheme to calculate the mean time to data loss. We also give failure growth rates of interest as functions of the number of  operational disks of protection groups. Moreover, we show the extension of the general model to cover advanced features such as hard errors, initial defective disks and average read overhead. In Section \ref{Distributed}, we consider an information dispersal scenario using horizontal and vertical allocations of disks in a distributed setting using the proposed failure model. Some numerical results are provided at the end of both Section \ref{GMFM} and Section \ref{Distributed}. Finally, Section \ref{CNCLSN} concludes the paper.  Some of the short proofs are included within the text whereas more sophisticated ones are moved to the end of the paper in appendices \ref{app1}-\ref{app3} to make the paper flow smoothly.

\section{Storage Arrays Based on MDS Codes and Error Protection Groups} \label{MDSEPG}

Recent developments in hard and solid state disk array industry and well known experimental survey data \cite{Schroeder} confirmed complex, dependent and non-uniform failure rates across the constituent storage units for large scale data storage applications. In particular, disk replacement rates show significant correlation between constituent disks and are no where near manufactures' reported disk failure rates. 

For storage space efficiency, disk array systems typically use Reed Solomon-based \cite{RS} MDS erasure codes with efficient implementations of encoding and decoding processes \cite{Plank}, \cite{Plank2}.  All component disks over which the parity information is computed and at which the computed redundancy is stored is called an EPG. A MDS $(n,m)$ erasure correcting code is applied to the $m$ data disks to generate $p = n-m$ parity disks in order to make up a $n$-disk EPG. Since the code is MDS, it can recover up to $p$ failed disks in an EPG. Disk stripping is used to allocate data and parity information units across and along the disk arrays in order not to have dedicated parity or data disks \cite{Salem} for better I/O performance. Fig. \ref{fig:RAID}.a shows an example EPG i.e., 1-D MDS disk array, consisting of $m$ data and $p$ parity disks made up of multiple storage units such as sectors in hard disk systems. Note that these protection schemes are not limited to single dimension. An example for 2-D MDS disk arrays is shown in Fig. \ref{fig:RAID}.b where in addition to horizontal MDS encoding, there is also a vertical MDS encoding that provides extra robustness against disk failures. Each disk in the array is part of two different EPGs and the data can efficiently be recovered by a collaborative  decoding of EPGs using iterative algorithms. In general, any subset of data blocks can be used to form EPGs \cite{Huang} which may provide additional advantages for different use cases.

\begin{figure}[t!]
\centering
\includegraphics[height = 40mm, width=\columnwidth]{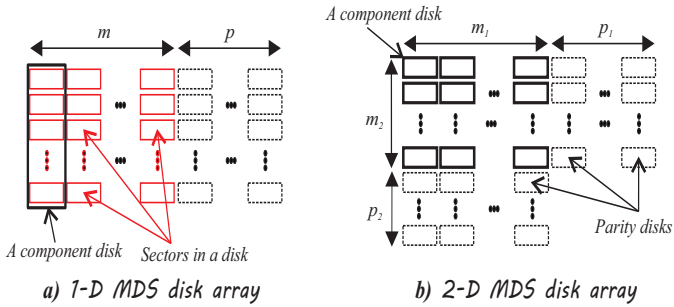}
\caption{a) An EPG consists of $m$ data and $p$ parity disk units generated by using a $(m+p,m)$ MDS code. This is also known as 1-D MDS disk array. b) A 2-D MDS disk array is shown in which data disks are both horizontally and vertically encoded using $(m_1+p_1,m_1)$ and $(m_2+p_2,m_2)$ MDS codes, respectively. }\label{fig:RAID}
\end{figure}

\begin{figure*}[t!]
\centering
\begin{tikzpicture}[node distance=1.1cm]
    % Add the states
    \footnotesize
    \node[state, minimum width=1cm, line width=0.4mm]             (n) {$ n $};
    \node[state, minimum width=1cm, line width=0.4mm, right=of n] (n1) {$n-1$};
    \node[state, minimum width=1cm, line width=0.4mm, right=of n1, draw=none] (n2) {$\dots$};
    \node[state, minimum width=1cm, line width=0.4mm, right=of n2] (j) {$j$};
    \node[state, minimum width=1cm, line width=0.4mm, line width=0.4mm, right=of j, draw=none] (n3) {$\dots$};
    \node[state, minimum width=1cm, line width=0.4mm, right=of n3] (m1) {$m + 1$};
    \node[state, minimum width=1cm, line width=0.4mm, right=of m1] (m) {$m$};
    \node[state, minimum width=1cm, line width=0.4mm, below=of m] (f) {$F$};

    % Connect the states with arrows
    \draw[every loop]
        (n) edge[bend left, auto=left] node {$n\lambda_0$} (n1)
        (n1) edge[bend left, auto=right]  node {$\mu_0$} (n)
        (n1) edge[bend left, auto=left] node {} (n2)
        (n2) edge[bend left, auto=left] node {$(j+1)\lambda_{n-j-1}$} (j)
        (j) edge[bend left, auto=left] node {$j\lambda_{n-j}$} (n3)
        (n3) edge[bend left, auto=left] node {} (m1)
        (m1) edge[bend left, auto=left] node {$(m+1)\lambda_{n-m-1}$} (m)
        (m) edge[bend left=30, auto=right]  node {$(n-m)\mu_{n-m-1}$} (n)
			(j) edge[bend left=30, auto=right]  node {$(n-j)\mu_{n-j-1}$} (n)
        (m) edge[bend left, auto=left] node {$m\lambda_{n-m}$} (f)
        (n) edge[bend right=22, auto=left, dashed] node {$\gamma_0$} (f)
        (n1) edge[bend right=7, auto=left, dashed] node {$\gamma_1$} (f)
        (j) edge[bend right=4, auto=left, dashed] node {$\gamma_{n-j}$} (f)
        (m1) edge[bend right=14, auto=left, dashed] node {$\gamma_{n-m-1}$} (f);
\end{tikzpicture}
\caption{A generalized Markov failure model in which the labels on each state designate the number of operational number of disks in an EPG.  In general, we have the relationship $\lambda_0 < \lambda_1 < \dots < \lambda_{n-m}$ to describe the increasing failure
rates as we have more and more disks fail within the same EPG. \emph{F}: Failure state.}\label{fig:MARKOV}
\end{figure*}
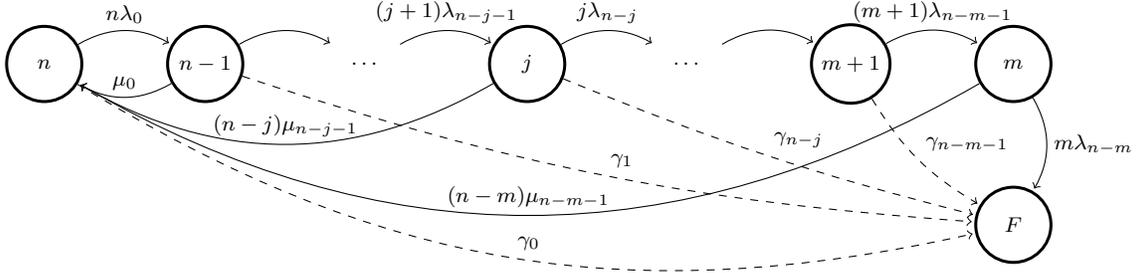

In order to characterize the data durability  in terms of Mean Time to Data Loss (MTTDL) metric for general subsets of EPGs using MDS codes, we need a generalized Markov model. This model needs to capture the dependent failure as well as concurrent repair rates. Plus it should involve error rates that can model total system crashes no matter how many extra parities are available. We will show that such a generalized model will be sufficient to capture different realistic scenarios for MDS protected storage arrays.

\section{A Generalized Markov Failure Model} \label{GMFM}

We use  MTTDL metric to quantify the average reliability of a given protected array of storage devices. Although there are studies arguing that  MTTDL is a deficient tool for the absolute measurements \cite{Greenan} and a system designer may be interested in the probability of failure for the first few years instead of the mean time to failure, MTTDL is still one of the most widely used intuitive reliability estimation metric that helps system designers to make accurate decisions.  Moreover, this notorious reliability metric, based on exponential failure and repair times, has been shown to be insensitive to the actual distribution of failure/repair times as long as the constituent storage devices have much larger mean time between failures than the mean repair time \cite{Venkatesan} and operate independent of each other. Thus, using MTTDL and known distributions we can simply generate an answer to the probability of failure for the first few years pretty accurately. Plus, closed form expressions for MTTDL shall be shown to be possible in this study for a general case and such analytical expressions usually help our intuition for modeling error--tolerant data storage systems.

The reliability characteristics of many storage devices follow what is usually known as "bathtub curve" \cite{Yang}. This curve is a composite of decreasing, constant and increasing failure rates at different times of the device lifetime. When disks are put into the service, the internal defective components fail quite rapidly. This leads to increased failure rates and the time where such failures take place is described as ``infant mortality period". When the disk enters into steady state in which only random errors dominate, the failure rates show steadiness. Thus, disks are in ``useful life period" and show constant failure rates. However, disks exhibit constant failure rates in their useful life period only if they work individually \cite{Modarres}. A simple failure model is given in \cite{Burkhard} which is considered to be a sufficient model for reliability estimations of EPGs containing $n$ disk units using a MDS code. As the disks age and wear due to different types of stresses and physical damages, the ``wear out period" kicks in and the failure rates start increasing again.

In an EPG, there are more than one disk component which work concurrently. In addition, a subset of disks may share the same hardware backend, once failed it will disallow access to all of the subset of disks. As argued before for such concurrent operations, a correct failure model must accommodate the dependent failures. Thus, a failure in an EPG will have an effect on the failure rates of the remaining disk components. In order to describe such a dependency with a simple model, we begin with a classical independent failure assumption and let the failure rate to vary based on the number of failed disks in the disk array.

We propose to use the generalized Markov model shown in Fig. \ref{fig:MARKOV}.  We assume a disk failure rate of $\lambda_0$ and a repair rate of $\mu_0$ at the beginning of operation. We also allowed transitions from any state $j, n \geq j \geq m+1$ to the failure state $F$. The rate of these transitions is called error rates and are quantified by $\gamma_i$. We can use error rates to model device dependent hard failures, multiple MDS array systems, non-MDS protected disk arrays or incorporate more realistic features. Thus, the proposed failure model is completely determined by  the parameters  $n$, $m$, the set of failure  rates $\boldsymbol{\lambda} = \{\lambda_0,\dots,\lambda_{n-m}\}$, repair rates $\boldsymbol{\mu} = \{\mu_0,\dots,\mu_{n-m-1}\}$ and error rates $\boldsymbol{\gamma} = \{\gamma_0,\dots,\gamma_{n-m-1}\}$.

In our model, the labels on each state designate the number of operational and accessible disks in an EPG. As can be observed from Fig. \ref{fig:MARKOV}, if more than $p=n-m$ disk failures happen at anytime, then the system goes into failure state ($F$). In the proposed general model, the disks are assumed to be repaired individually and the repair process produces all repaired disks at once i.e., concurrent maintenance. Particularly, since we are interested in 1-D MDS arrays, it is unnecessary to introduce $\gamma_i$s in which case it becomes possible to quantify MTTDL in a closed form. For better tractability, we will focus on the reduced model with $\gamma_i = 0$ in the next section and then extend results to comprise the more general model given in Fig. \ref{fig:MARKOV}.

\begin{figure*}[htp!]
    \centering
\begin{eqnarray}
L_{P_n}(s)s + \lambda_0 nL_{P_n}(s)  - \mu_0L_{P_{n-1}}(s) - 2\mu_1 L_{P_{n-2}}(s)- \dots -  p\mu_{n-m-1} L_{P_{m}}(s) = P_n(0) &=& 1  \nonumber \\
-\lambda_0nL_{P_n}(s) + L_{P_{n-1}}(s)s + \lambda_1 (n-1)L_{P_{n-1}}(s) +\mu_0L_{P_{n-1}}(s)  = P_{n-1}(0) &=& 0  \nonumber \\
-\lambda_1(n-1)L_{P_{n-1}}(s) + L_{P_{n-2}}(s)s + \lambda_2 (n-2)L_{P_{n-2}}(s) +2\mu_1L_{P_{n-2}}(s)  = P_{n-2}(0) &=& 0 \nonumber \\
 \vdots \ \ \ \ \ \ \ \vdots \ \ \ \ \ \ \ \vdots \ \ \ \ \ \ \ \vdots \ \ \ \ \ \ \  \vdots \ \ \ \ \ \ \ &=& 0 \nonumber \\
 -\lambda_{n-m-1}(m+1) L_{P_{m+1}}(s)  + L_{P_{m}}(s)s + \lambda_{n-m}mL_{P_{m}}(s) + p\mu_{n-m-1}L_{P_{m}}(s) = P_{m}(0) &=& 0 \nonumber  \\
 -\lambda_{n-m}mL_{P_{m}}(s) +  L_{P_{F}}(s)s  = P_{F}(0) &=& 0 \nonumber
\end{eqnarray}
\end{figure*}

%\begin{figure*}
%\small
% \[\mathbf{A}(s) =  \left[ \begin{array}{cccccc}
%s + \lambda_0 n   & -\mu_0 & -2\mu_1 & \dots & %-p\mu_{n-m-1} & 0 \\
%-\lambda_0n & s + \lambda_1 (n-1) +\mu_0 & 0 & \dots & 0 %& 0 \\
%0 & -\lambda_1(n-1) & s + \lambda_2 (n-2) +2\mu_1 & \dots %& 0 & 0 \\
%0 & 0  & - \lambda_2 (n-2) & \dots & 0 & 0 \\
%\vdots & \vdots &  & \ddots & 0 & 0 \\
%0 & 0 & \dots &  -\lambda_{n-m-1}(m+1) & s + %\lambda_{n-m}m \atop +p\mu_{n-m-1} & 0 \\
%0 & 0 & \dots & 0 & -\lambda_{n-m}m & s \end{array} %\right].\]
%\normalsize 
%\end{figure*}

\subsection{Mean time to Data Loss Performance}

The MTTDL is a measure used to quantify the average time before the storage system goes into the failure state $F$. Let us associate with each  state $P_i(t)$, the probability of being in state $i$ at time $t$. Using a similar notation with \cite{Burkhard}, the reliability function $R(t)$ is the probability of being in one of the states $n,\dots,m$ at time $t$, and is given by
\begin{eqnarray}
R(t) = P_{n}(t) + P_{n-1}(t) + \dots + P_{m+1}(t) + P_{m}(t) \label{Rt}
\end{eqnarray}

Now suppose that the disk lifetime random variable $t$ has the probability density function $f(t)$ and the reliability function $R(t) \triangleq \int_{t}^{\infty} f(x) dx$. The MTTDL using $p$ parity disks (denoted by $MTTDL_p$) is defined as \cite{Modarres}
\begin{eqnarray}
MTTDL_p &\triangleq& \mathbb{E}[t] = \int_0^\infty tf(t)dt = \int_0^\infty t\left(\frac{-dR(t)}{dt}\right)dt \nonumber \\
&=& -tR(t)|_0^\infty + \int_0^\infty R(t)dt = \int_0^\infty R(t)dt
\end{eqnarray}

Laplace transform of the reliability function in Eqn. (\ref{Rt}) i.e., $L_R(s)$ is instrumental for evaluating the integral,
\begin{eqnarray}
L_R(s) \triangleq \int_0^\infty R(t)e^{-st}dt \\ \Longrightarrow L_R(0) =  MTTDL_p = \sum_{j=m}^{m+p}  L_{P_j}(0)
\end{eqnarray}

We can write the following state  transition equations based on Fig. \ref{fig:MARKOV} with $\gamma_i = 0$,
\small
\begin{eqnarray}
&& \frac{dP_n(t)}{dt} + \lambda_0 nP_n(t)  - \mu_0 P_{n-1}(t) - 2\mu_1 P_{n-2}(t) - \dots \nonumber \\ && \hspace{5.2cm} -  p\mu_{n-m+1} P_{m}(t) = 0  \nonumber \\
&& -\lambda_0nP_n(t) + \frac{dP_{n-1}(t)}{dt} + (\lambda_1 (n-1) +\mu_0 )P_{n-1}(t)  = 0  \nonumber \\
&& -\lambda_1(n-1)P_{n-1}(t) + \frac{dP_{n-2}(t)}{dt} + (\lambda_2 (n-2) +2\mu_1)P_{n-2}(t)  = 0  \nonumber \\
&& \vdots \ \ \ \ \ \ \ \vdots \ \ \ \ \ \ \ \vdots \ \ \ \ \ \ \ \vdots \ \ \ \ \ \ \  \vdots \ \ \ \ \ \ \ = 0 \nonumber \\
&& -\lambda_{n-m-1}(m+1) P_{m+1}(t) +  \frac{dP_{m}(t)}{dt} + \dots \nonumber \\ && \hspace{3.65cm} (\lambda_{n-m}m + p\mu_{n-m-1})P_{m}(t) = 0  \nonumber \\
&& -\lambda_{n-m}mP_{m}(t) +  \frac{dP_{F}(t)}{dt}  = 0 \nonumber
\end{eqnarray}
\normalsize
with initial conditions  $P_n(0) = 1, P_j(0) = 0$ for $j = n-1, n-2, \dots, m+1, m, \emph{F}$ since all disks are assumed to be operational at the beginning. Taking the Laplace transform of each equation will yield linear equations in the transform domain.  Finally, we replace the coefficients of each linear equation to form a $p+2 \times p+2$ matrix $\mathbf{A}(s)$, as shown in the next page with $\gamma_i=0$.

Thus, using linear algebra we have the following equation to solve:
\begin{eqnarray}
\mathbf{A}(s) \mathbf{P}(s) = \mathbf{N}^{(0)}  \label{Eqn2solve}
\end{eqnarray}
where $\mathbf{P}(s) = [L_{P_n}(s)  \ \ \dots \ \  L_{P_{m+1}}(s) \ \ L_{P_m}(s) \ \ L_{P_F}(s)]^T$ and $
\mathbf{N}^{(l)} = [0 \ \ 0 \ \ \dots 0 \ \ 1 \ \ 0 \dots \ \ 0 \ \ 0]^T$ where $(l+1)$th array entry is unity. Once we solve for $\mathbf{P}(s)$, it is straightforward to compute $MTTDL_p = \sum_{j=m}^{m+p}  L_{P_j}(0)$.  Let us consider an example with $p=1$. By solving the Eqn. (\ref{Eqn2solve}), we obtain
\begin{eqnarray}
L_{P_{m+1}}(s) &=& (\lambda_1m + s + \mu_0)/\phi_1(s) \\
L_{P_{m}}(s) &=& (\lambda_0(m + 1))/\phi_1(s) \\
L_{P_{F}}(s) &=& (\lambda_0\lambda_1m(m+1))/s\phi_1(s)
\end{eqnarray}
where $\phi_1(s) = s(s+ \mu_0  + \lambda_0(m+1) + \lambda_1m ) + \lambda_0\lambda_1m(m+1)$
and the mean time to data loss is  therefore given by
\begin{eqnarray}
MTTDL_1 = \sum_{j=m}^{m+1}  L_{P_j}(0) = \frac{\lambda_0(m + 1) + \lambda_1m + \mu_0 }{\lambda_0\lambda_1m(m+1)} \label{MTTDL1}
\end{eqnarray}

Similarly, we can compute for $p=2$ as follows. 
\begin{eqnarray}
MTTDL_2 &=& \sum_{j=m}^{m+2}  L_{P_j}(0) \nonumber  \\
&=& \frac{(2\mu_1 + \lambda_2 m)(\lambda_0(m + 2) + \lambda_1(m+1) + \mu_0)}{\lambda_0\lambda_1\lambda_2m(m+1)(m+2)} \nonumber \\ && \ + \frac{1}{\lambda_2m} \label{MTTDL2}
\end{eqnarray}

As can be seen, expressions are getting more complex as we increase the number of parity disks/blocks, $p$ even in the absence of error rates $\gamma_i \in \boldsymbol{\gamma}$.

\begin{figure*} 
\normalsize
 \[\mathbf{A}(s) =  \left[ \begin{array}{cccccc}
s + \lambda_0 n + \gamma_0   & -\mu_0 & -2\mu_1 & \dots & -p\mu_{n-m-1} & 0 \\
-\lambda_0n & s + \lambda_1 (n-1) \atop +\mu_0 + \gamma_1 & 0 & \dots & 0 & 0 \\
0 & -\lambda_1(n-1) & s + \lambda_2 (n-2) \atop +2\mu_1 + \gamma_2 & \dots & 0 & 0 \\
0 & 0  & - \lambda_2 (n-2) & \dots & 0 & 0 \\
\vdots & \vdots &  & \ddots & 0 & 0 \\
0 & 0 & \dots &  -\lambda_{n-m-1}(m+1) & s + \lambda_{n-m}m \atop +p\mu_{n-m-1} & 0 \\
-\gamma_0 & -\gamma_1 & \dots & -\gamma_{n-m-1} & -\lambda_{n-m}m & s \end{array} \right].\]
\normalsize
\end{figure*}

\subsection{Efficient Computation of MTTDL}

As can be seen for large $n$ and $p$, it becomes harder to solve  Eqn. (\ref{Eqn2solve}) and numerically unstable to find $\mathbf{A}^{-1}(s)$. Particularly storage systems that use fountain--like codes to generate boundless number of parities \cite{Luby} and large number of network nodes for data distribution shall benefit from efficient and generalized formula for $MTTDL_p$. For a given EPG of size $n$ disks, we present below a straightforward method to efficiently compute $MTTDL_p$ for any $p \in \{1,2,\dots,n-1\}$.

For $x=0,1,\dots,p-1$, we begin by defining the following array with entries,
\small
\begin{eqnarray}
\boldsymbol{\Lambda}_x^{(p)} &\triangleq& \left[ \begin{array}{c}
\lambda_0 (m+p)  \\
\lambda_1 (m+p-1)   \\
\lambda_2 (m+p-2)  \\
\vdots   \\
\lambda_{p-1} (m+1)
\end{array} \right] \nonumber \\ && +
\mathbf{V}_x^{(p)}  \left( \left[ \begin{array}{c}
0  \\
\mu_0   \\
2\mu_1  \\
\vdots   \\
(p-1)\mu_{p-2}
\end{array} \right]  +
\left[ \begin{array}{c}
\gamma_0  \\
\gamma_1   \\
\gamma_2  \\
\vdots   \\
\gamma_{p-1}
\end{array} \right] \right)
\end{eqnarray}
where
\begin{eqnarray}
\mathbf{V}_x^{(p)}   = \begin{bmatrix}
\mathbf{0}_{x} & \mathbf{0}_{x \times p-x}   \\
\mathbf{0}_{p-x \times x} & \mathbf{I}_{p-x}
\end{bmatrix}
\end{eqnarray}
\normalsize
and  $\mathbf{I}_{p-x}$ and $\mathbf{0}_{x}$ represent identity  and all-zero matrices, respectively.

\vspace{1mm}
\begin{theorem} \label{thm31}
Let us assume $\Lambda_x^{p}(j)$ denote the $(j+1)$-th entry of the array $\boldsymbol{\Lambda}_x^{(p)} $. If we let $\gamma_i = 0$ then, we have the following transform domain expressions evaluated at $s=0$,
\begin{eqnarray}
L_{P_{m+p-x}}(0) &=& \frac{(p\mu_{p-1} + \lambda_pm)}{\phi_p(0)}\prod_{\substack{j=0 \\ j \not=x}}^{p-1}\Lambda_x^{p}(j)   \\ L_{P_{m}}(0) &=& \frac{1}{\phi_p(0)}\prod_{i=0}^{p-1} \lambda_i (m + 1 + i) \label{thm1}
\end{eqnarray}
where the denominator is given by
\begin{eqnarray}
\phi_p(0) = \prod_{i=0}^{p}\lambda_i (m + i) \label{Phip}
\end{eqnarray}
\end{theorem}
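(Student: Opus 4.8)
The plan is to avoid inverting $\mathbf{A}(s)$ and instead solve $\mathbf{A}(s)\mathbf{P}(s)=\mathbf{N}^{(0)}$ by forward substitution, exploiting that with $\gamma_i=0$ every row but the first is bidiagonal. Rows $1,\dots,p$ of the Laplace-transformed state equations read $D_k(s)\,L_{P_{n-k}}(s)=(n-k+1)\lambda_{k-1}\,L_{P_{n-k+1}}(s)$, $k=1,\dots,p$, where $D_k(s)\triangleq s+\lambda_k(n-k)+k\mu_{k-1}$, while the last row gives $s\,L_{P_F}(s)=\lambda_p m\,L_{P_m}(s)$. Iterating the first family (empty products read as $1$) yields
\begin{equation}
L_{P_{n-k}}(s)=\frac{\prod_{j=0}^{k-1}\lambda_j(n-j)}{\prod_{i=1}^{k}D_i(s)}\,L_{P_n}(s),\qquad 0\le k\le p.
\end{equation}

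To pin down $L_{P_n}(s)$, I would substitute this into the first state equation $(s+\lambda_0 n)L_{P_n}(s)-\sum_{k=1}^{p}k\mu_{k-1}L_{P_{n-k}}(s)=1$ and clear the denominator $\prod_{i=1}^{p}D_i(s)$, obtaining $L_{P_n}(s)=\prod_{i=1}^{p}D_i(s)/\phi_p(s)$, where $\phi_p(s)$ is the polynomial $(s+\lambda_0 n)\prod_{i=1}^{p}D_i(s)-\sum_{k=1}^{p}k\mu_{k-1}\big(\prod_{j=0}^{k-1}\lambda_j(n-j)\big)\big(\prod_{i=k+1}^{p}D_i(s)\big)$; this is the common denominator of every $L_{P_j}(s)$, generalizing the $\phi_1(s)$ of the previous subsection. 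Back-substituting gives $L_{P_{n-k}}(s)=\big(\prod_{j=0}^{k-1}\lambda_j(n-j)\big)\big(\prod_{i=k+1}^{p}D_i(s)\big)/\phi_p(s)$. Evaluating at $s=0$, so $D_i(0)=\lambda_i(m+p-i)+i\mu_{i-1}$, and writing $k=x$ with $0\le x\le p-1$: the product $\prod_{i=x+1}^{p}D_i(0)$ contains the factor $D_p(0)=\lambda_p m+p\mu_{p-1}$, which I factor out, leaving $\lambda_j(m+p-j)$ for $j<x$ and $\lambda_j(m+p-j)+j\mu_{j-1}=D_j(0)$ for $x<j\le p-1$. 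These are exactly the entries $\Lambda_x^{p}(j)$ with $j\neq x$ --- the block $\mathbf{V}_x^{(p)}$ in $\boldsymbol{\Lambda}_x^{(p)}$ is precisely what suppresses the $\mu$-contribution in the first $x$ coordinates --- so the stated formula for $L_{P_{m+p-x}}(0)$ follows. For $k=p$ the product $\prod_{i=p+1}^{p}D_i(0)$ is empty, leaving $L_{P_m}(0)=\prod_{j=0}^{p-1}\lambda_j(n-j)/\phi_p(0)$, with no $(\lambda_p m+p\mu_{p-1})$ factor.

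The only genuine computation, and the main obstacle, is the evaluation $\phi_p(0)=\prod_{i=0}^{p}\lambda_i(m+i)$. Putting $a_i\triangleq\lambda_i(n-i)=\lambda_i(m+p-i)$ and $b_i\triangleq i\mu_{i-1}$ (so $a_0=\lambda_0 n$, $b_0=0$, $D_i(0)=a_i+b_i$, and $\prod_{i=0}^{p}a_i=\prod_{i=0}^{p}\lambda_i(m+i)$ since $\{m+p-i:0\le i\le p\}=\{m+i:0\le i\le p\}$ as multisets), the claim is the polynomial identity $a_0\prod_{i=1}^{p}(a_i+b_i)-\sum_{k=1}^{p}b_k\big(\prod_{j=0}^{k-1}a_j\big)\big(\prod_{i=k+1}^{p}(a_i+b_i)\big)=\prod_{i=0}^{p}a_i$. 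I would prove this by induction on $p$: calling the left side $T_p$, one checks $T_1=a_0a_1$, and because $(a_p+b_p)$ divides every term except the last summand $b_p\prod_{j=0}^{p-1}a_j$, one gets $T_p=(a_p+b_p)T_{p-1}-b_p\prod_{j=0}^{p-1}a_j$; substituting the inductive hypothesis $T_{p-1}=\prod_{j=0}^{p-1}a_j$ collapses this to $T_p=a_p\prod_{j=0}^{p-1}a_j=\prod_{j=0}^{p}a_j$. Finally the same multiset identity turns $L_{P_m}(0)=\prod_{j=0}^{p-1}\lambda_j(m+p-j)/\phi_p(0)$ into $\prod_{i=0}^{p-1}\lambda_i(m+1+i)/\phi_p(0)$, which is the advertised expression, completing the proof.
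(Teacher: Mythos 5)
Your proposal is correct, and it reaches the theorem by a genuinely different route than the paper. The paper's Appendix A argues by induction on the number of data disks for a fixed EPG size $n$: it first shows (by zeroing selected rates) that the matrix for $m$ data disks contains the systems for fewer parities as submatrices, then verifies that the claimed expressions are preserved under the reduction $m\to m+1$, $p\to p-1$ using the identities $\phi_p(0)=m\lambda_p\,\phi_{p-1}(0)$ and $\lambda_{p-1}(m+1)=\lambda_p m \Leftrightarrow L_{P_{m+1}}(0)=L_{P_m}(0)$; it is a consistency/verification argument rather than a derivation. You instead solve the system directly: with $\gamma_i=0$ the transformed equations below the first row are bidiagonal, so forward substitution gives $L_{P_{n-k}}(s)$ as an explicit multiple of $L_{P_n}(s)$, the first row then yields the common denominator $\phi_p(s)$, and the only nontrivial step is the evaluation $\phi_p(0)=\prod_{i=0}^{p}\lambda_i(m+i)$, which you reduce to the polynomial identity $a_0\prod_{i=1}^p(a_i+b_i)-\sum_k b_k\bigl(\prod_{j<k}a_j\bigr)\prod_{i>k}(a_i+b_i)=\prod_i a_i$ and prove by a clean induction on $p$ (the telescoping $(a_p+b_p)T_{p-1}-b_p\prod_{j<p}a_j$ step is exactly right); the remaining bookkeeping is the multiset observation that pairing $\lambda_j$ with $(m+p-j)$ or with $(m+j)$ gives the same product, and the identification of $\prod_{j\neq x}\Lambda_x^p(j)$ with $\bigl(\prod_{j<x}\lambda_j(n-j)\bigr)\prod_{x<j\le p-1}D_j(0)$, which matches the role of $\mathbf{V}_x^{(p)}$. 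What your approach buys is a self-contained constructive proof that also produces the full transform-domain solution $L_{P_{n-k}}(s)$ for all $s$ (recovering the paper's $\phi_1(s)$ example as a special case), whereas the paper's induction buys the structural insight—used elsewhere in the paper, e.g., in the proof of Theorem 3.4—that the same formulas govern EPGs with fewer parities once appropriate rates are set to zero.
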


\begin{proof}
The proof is provided in Appendix A.
\end{proof}

From Eqns. (\ref{thm1}) and (\ref{Phip}), we can deduce $L_{P_{m}}(0) = 1/\lambda_pm$. Therefore, using Theorem \ref{thm31}  we find a closed form expression for $MTTDL_p$ as follows,
\begin{eqnarray}
MTTDL_p &=& 
%\sum_{j=m}^{m+p}  L_{P_j}(0) 
\frac{(p\mu_{p-1} + \lambda_pm)}{\phi_p(0)} \sum_{x=0}^{p-1} \prod_{\substack{j=0 \\ j \not=x}}^{p-1}\Lambda_x^{p}(j) + \frac{1}{\lambda_pm} \nonumber
\end{eqnarray}

One can check the accuracy of the general form by setting $p=1$ and $p=2$ and comparing the results with Eqns (\ref{MTTDL1}) and (\ref{MTTDL2}). In addition, assuming a fixed failure and repair rates, i.e., $\lambda_i = \lambda$ and $\mu_{i-1}=\mu$ for all $i$, we can easily realize that these expressions are the generalized versions of the results found in \cite{Burkhard}.

Alternatively,  $MTTDL_p$ can be computed using numerical tools by recognizing that
\begin{eqnarray}
\mathbf{A}_{p+1}(s) \mathbf{P}_{p+1}(s) = \mathbf{N}^{(0)}_{p+1}
\end{eqnarray}
where the subscript $p+1$ denotes the upper left square submatrix if it is a matrix and the first $p+1$ entries if it is an array. Hence, $[L_{P_n}(0) \ \ L_{P_{n-1}}(0) \ \ \dots \ \  L_{P_{m-1}}(0) \ \ L_{P_m}(0)]  = \lim_{s \rightarrow 0}\mathbf{A}_{p+1}^{-1}(s)\mathbf{N}^{(0)}_{p+1}$. Therefore, we have
\begin{eqnarray}
MTTDL_p = \sum_{j=m}^{n}  L_{P_j}(0) = \lim_{s \rightarrow 0} \left( \mathbf{A}_{p+1}^{-1}(s)\mathbf{N}^{(0)}_{p+1}\textbf{1}^T \right) \label{eqn32}
\end{eqnarray}
where $\textbf{1}$ is the row array of ones. Here the matrix inverse is the part that is costly and numerically unreliable for large $p$.

Let us consider the most general expression for $\mathbf{A}(s)$ shown above. Thus, we can still use Equation (\ref{eqn32}) to quantify $MTTDL_p$. However, the expressions given in Theorem 3.1 for efficient MTTDL computation holds except $\phi_p(0)$ which did not depend on $\gamma_i$. We provide the following recursive relation without proof for computing $\phi_p(0)$. The proof can be obtained using induction for the known values of $\xi_t$.  For $1 \leq t \leq p$ and the initial condition $\phi_0(0) = n\lambda_0$, we have the following recursion
\begin{align}
\phi_t(0) &= \prod_{i=0}^t \lambda_i(n-i) \nonumber \\
& + (t\mu_{t-1} + \lambda_t(n-t))\Bigg[\phi_{t-1}(0) - \prod_{i=0}^{t-1} \lambda_i(n-i) \nonumber \\ & + \gamma_{t-1}\left(\prod_{i=0}^{t-2}(\gamma_i + \lambda_i(n-i)) + \xi_t \right)\Bigg]  \label{Eqn30}
\end{align}
where $\xi_t \geq 0$ is some small number whose closed form expression is an open problem. However, using symbolic algebra tools, we can obtain few initial evaluations $\xi_1=\xi_2=0$ and $\xi_3 = \gamma_0\mu_0$.

\vspace{1mm}
\begin{corollary}
If $\gamma_j = 0$ for $0 \leq j \leq p-1$, we have $\phi_p(0) = \prod_{i=0}^p \lambda_i(m+i)$.
\end{corollary}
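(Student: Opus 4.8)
The plan is to induct on $t$ directly in the recursion~(\ref{Eqn30}), using the single structural fact that matters here: once $\gamma_j=0$ for every $0\le j\le p-1$, every term of~(\ref{Eqn30}) carrying a factor $\gamma_{t-1}$ vanishes --- in particular the term holding the still-open quantity $\xi_t$. So the first step is to substitute $\gamma_{t-1}=0$, which is legitimate for $1\le t\le p$ since then $0\le t-1\le p-1$, and observe that~(\ref{Eqn30}) collapses to $\phi_t(0)=\prod_{i=0}^{t}\lambda_i(n-i)+\bigl(t\mu_{t-1}+\lambda_t(n-t)\bigr)\bigl[\phi_{t-1}(0)-\prod_{i=0}^{t-1}\lambda_i(n-i)\bigr]$ for $1\le t\le p$, with initial condition $\phi_0(0)=n\lambda_0$.

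Second, I would prove by induction on $t$ that $\phi_t(0)=\prod_{i=0}^{t}\lambda_i(n-i)$ for all $0\le t\le p$. The base case $t=0$ is exactly the stated initial condition, since $\prod_{i=0}^{0}\lambda_i(n-i)=n\lambda_0$. For the inductive step, the hypothesis $\phi_{t-1}(0)=\prod_{i=0}^{t-1}\lambda_i(n-i)$ makes the bracket in the collapsed recursion identically zero, so --- regardless of the value of the coefficient $t\mu_{t-1}+\lambda_t(n-t)$ --- we are left with $\phi_t(0)=\prod_{i=0}^{t}\lambda_i(n-i)$. Taking $t=p$ gives $\phi_p(0)=\prod_{i=0}^{p}\lambda_i(n-i)$, and since $n=m+p$ the multiset $\{n-i:0\le i\le p\}$ equals $\{m+i:0\le i\le p\}$; pulling the $\lambda_i$'s out of the products then yields $\prod_{i=0}^{p}\lambda_i(n-i)=\bigl(\prod_{i=0}^{p}\lambda_i\bigr)\bigl(\prod_{i=0}^{p}(m+i)\bigr)=\prod_{i=0}^{p}\lambda_i(m+i)$, which is the claim (and is consistent with~(\ref{Phip}) in Theorem~\ref{thm31}, as it must be).

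There is no genuinely hard computation in this argument. The only point that needs care is conceptual rather than technical: the corollary is a statement \emph{about} the recursion~(\ref{Eqn30}), which was itself presented without proof and depends on the unknown $\xi_t$, so one must make explicit that setting all $\gamma_j$ to zero annihilates every $\gamma$-bearing (hence every $\xi_t$-bearing) term \emph{before} the induction even starts --- that is exactly why the clean product formula survives. A secondary subtlety worth flagging is that the intermediate values $\phi_t(0)$ for $t<p$ equal $\prod_{i=0}^{t}\lambda_i(n-i)$ with $n=m+p$ and are \emph{not} equal to $\prod_{i=0}^{t}\lambda_i(m+i)$; the two forms coincide only at $t=p$, so the relabeling $n-i\mapsto m+i$ should be deferred to the very last step.
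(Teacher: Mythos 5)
Your proof is correct and follows essentially the same route as the paper's: the paper likewise notes that the hypothesis kills all $\gamma$-bearing (hence $\xi_t$-bearing) terms, verifies $\phi_t(0)=\prod_{j=0}^t\lambda_j(n-j)$ from the recursion, and finishes with the change of variables at $t=p$. Your write-up merely makes the ``easy to verify'' step explicit as an induction, which is a fair expansion rather than a different argument.
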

\begin{proof}
  Due to hypothesis $\gamma_j = 0$ for $0 \leq j \leq p-1$, we do not need to worry about $\xi_t$s since they cancel out. Then, it is easy to verify that $\phi_t(0) = \prod_{j=0}^t \lambda_j(n-j)$. By setting $t=p$ and using the change of variables $i=p-j$, the result follows.
\end{proof}

Thus, above corollary verifies Equation (\ref{Phip}). We observe that if we set $\xi_t = 0$ for  $1 \leq t \leq p$, we get $\phi_p^*(0) \leq \phi_p(0)$ where
\begin{align}
\phi_t^*(0) &= \prod_{i=0}^t \lambda_i(n-i) + \\
& (t\mu_{t-1} + \lambda_t(n-t))\Bigg[\phi_{t-1}^*(0) - \prod_{i=0}^{t-1} \lambda_i(n-i) \\ & + \gamma_{t-1}\prod_{i=0}^{t-2}(\gamma_i + \lambda_i(n-i))\Bigg] \label{Eqn31}
\end{align}
with $\phi_0^*(0) = n\lambda_0$. This implies that if we replace $\phi_p(0)$ with $\phi_p^*(0)$ we can find the upper bound for $MTTDL_p$ as follows.
\begin{eqnarray}
MTTDL_p &=& \sum_{j=m}^{m+p}  L_{P_j}(0) \leq \frac{1}{\phi_p^*(0)} \sum_{x=0}^{p} \prod_{\substack{j=0 \\ j \not=x}}^{p}\Lambda_x^{p}(j) \nonumber
\end{eqnarray}
where equality strictly holds for $p=1,2$. For $p>2$, the error value in the overestimation (the upper bound) can be found using the following closed form expression  which quantifies the relationship between $\phi_p^*(0)$ and $\phi_p(0)$.

\vspace{1mm}
\begin{theorem} \label{Thm33}
Let $\phi_p^*(0)$ be an underestimator of  $\phi_p(0)$ as defined above, we have
\begin{eqnarray}
% \nonumber % Remove numbering (before each equation)
\phi_p(0) - \phi_p^*(0) = \sum_{i=3}^p \gamma_{i-1}\xi_{i} \prod_{j=i}^p (j \mu_{j-1} + \lambda_j(n-j)) \nonumber
\end{eqnarray}
\end{theorem}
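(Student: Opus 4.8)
The plan is to prove the identity $\phi_p(0)-\phi_p^*(0)=\sum_{i=3}^p \gamma_{i-1}\xi_i\prod_{j=i}^p(j\mu_{j-1}+\lambda_j(n-j))$ by induction on $p$, subtracting the recursion \eqref{Eqn31} from \eqref{Eqn30} term by term. Since both $\phi_t(0)$ and $\phi_t^*(0)$ share the initial value $n\lambda_0$ and the leading product $\prod_{i=0}^t\lambda_i(n-i)$, as well as the factor $\gamma_{t-1}\prod_{i=0}^{t-2}(\gamma_i+\lambda_i(n-i))$ inside the bracket, almost everything cancels, and what survives is exactly the discrepancy coming from the $\xi_t$ term and the propagated difference $\phi_{t-1}(0)-\phi_{t-1}^*(0)$.

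First I would set $D_t \triangleq \phi_t(0)-\phi_t^*(0)$ and write, using the common coefficient $c_t \triangleq t\mu_{t-1}+\lambda_t(n-t)$,
\begin{eqnarray}
D_t = c_t\bigl(D_{t-1} + \gamma_{t-1}\xi_t\bigr), \qquad D_0 = D_1 = 0, \nonumber
\end{eqnarray}
which follows directly by differencing the two recursions; note $\xi_1=\xi_2=0$ forces $D_2=0$ as well, consistent with the claim that equality holds for $p=1,2$. Then I would unroll this first-order linear recursion: $D_p = \sum_{i=1}^p \gamma_{i-1}\xi_i\prod_{j=i}^p c_j$, and since $\xi_1=\xi_2=0$ the sum effectively starts at $i=3$, giving exactly the stated formula once $c_j$ is written out as $j\mu_{j-1}+\lambda_j(n-j)$. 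The base case $p=3$ can be checked directly against $\xi_3=\gamma_0\mu_0$.

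The main obstacle is that the recursion \eqref{Eqn30} for $\phi_t(0)$ is itself stated without proof (only "can be obtained using induction for the known values of $\xi_t$"), so the cleanliness of the cancellation hinges on $\xi_t$ being \emph{defined} precisely as the extra term in \eqref{Eqn30} rather than characterized independently; I would treat \eqref{Eqn30} as the operative definition of $\xi_t$ and note that under that reading the differencing is purely mechanical. A secondary subtlety is bookkeeping the index ranges in the products $\prod_{i=0}^{t-2}(\gamma_i+\lambda_i(n-i))$ and confirming this factor appears identically in both \eqref{Eqn30} and \eqref{Eqn31} so it genuinely drops out; this is routine but is the place where an off-by-one error would creep in, so I would verify it explicitly for $t=3$ before invoking the general induction step.
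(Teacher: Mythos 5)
Your proposal is correct and follows essentially the same route as the paper: differencing the recursions \eqref{Eqn30} and \eqref{Eqn31} gives the first-order relation $\phi_{t}(0)-\phi_{t}^*(0) = \left(t\mu_{t-1}+\lambda_t(n-t)\right)\left[\phi_{t-1}(0)-\phi_{t-1}^*(0)+\gamma_{t-1}\xi_t\right]$, which the paper likewise establishes and then resolves by induction, invoking $\xi_1=\xi_2=0$ to start the sum at $i=3$. Your unrolling of the recursion is just the closed-form version of that same induction, so the two arguments coincide.
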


\begin{proof}
  It is sufficient to prove the following for $1 \leq t \leq p$,
  \begin{eqnarray}
% \nonumber % Remove numbering (before each equation)
  \phi_t(0) - \phi_t^*(0) = \sum_{i=1}^t \gamma_{i-1}\xi_{i} \prod_{j=i}^t (j \mu_{j-1} + \lambda_j(n-j)) \label{Eqn33}
\end{eqnarray}

For $t=1$, we have $\phi_1(0) - \phi_1^*(0) = \gamma_0\xi_1(\mu_0+\lambda_1(n-1))$. This is easy to  verify because we have $\phi_0^*(0) = \phi_0(0) = n\lambda_0$. Now suppose that Equation \ref{Eqn33} holds for $t$, and let us show that the same holds for $t+1$. Using Equations (\ref{Eqn30}), (\ref{Eqn31}) and the hypothesis (\ref{Eqn33}) we have $\phi_{t+1}(0) - \phi_{t+1}^*(0)$ equals
\begin{eqnarray}
  &=& ((t+1)\mu_{t} + \lambda_{t+1}(n-t-1)) \left[ \phi_t(0) - \phi_t^*(0) + \gamma_{t}\xi_{t+1}\right] \nonumber \\
 &=& \sum_{i=1}^t \gamma_{i-1}\xi_{i} \prod_{j=i}^{t+1} (j \mu_{j-1} + \lambda_j(n-j)) + \nonumber \\
 && \ \ \ \ \ \ \ \ \ \gamma_{t}\xi_{t+1}((t+1)\mu_{t} + \lambda_{t+1}(n-t-1)) \\
 &=& \sum_{i=3}^{t+1} \gamma_{i-1}\xi_{i} \prod_{j=i}^{t+1} (j \mu_{j-1} + \lambda_j(n-j))
 \end{eqnarray}
 which follows from the fact that $\xi_1 = \xi_2 = 0$. The proof completes if we let  $t=p$.
\end{proof}

\subsection{MTTDL with hard errors}

The general Markov model analyzed earlier is pretty useful for advanced reliability calculations. Here we give one of the simplest improvements over the classical reliability modeling, namely the hard errors that our Markov model can easily incorporate. Hard errors in modern storage arrays are observed to be necessary when the system operates in the critical
mode i.e., a state in which one more device failure leads to total system crash and/or
data loss. This requirement is easily covered by the general Markov model introduced in this study.
Let $\nu$ represent the probability of seeing an uncorrectable error per device
read during say a device rebuilt process. Let UCER denote the uncorrectable error rate of the device
(such as $10^{-15}$, expressed in terms of errors per number of bytes or bits read), $\eta$ is typically given by \cite{Hafner}

\begin{eqnarray}
\eta = 1 - (1 - \textrm{UCER})^{\textrm{device capacity}}
\end{eqnarray}

A transition is needed from state $m+1$ to state $F$ in order to model the rate at which the system
encounters an uncorrectable error while reading and/or rebuilding failed device data. Note that the probability of encountering an uncorrectable
error when reading $m$ devices for rebuild (Note here that we assume conventional MDS codes, which may require many device reads and devices encounter uncorrectable errors independently) is given by  
\begin{eqnarray}
P_{UCER} = 1 - (1 - \eta)^m
\end{eqnarray}

Based on the analysis given in \cite{Hafner}, the uncorrectable error rate for an EPG is computed as the product of the rate that a disk fails
when $m+1$ devices are available and $P_{UCER}$. In order to integrate this probability into the general Markov model introduced earlier, we have to  make the following replacements
\begin{eqnarray}
(m+1)\lambda_{p-1} &\Rightarrow& (m+1)\lambda_{p-1}(1-P_{UCER}) \nonumber \\ && =  (m+1)\lambda_{p-1}(1 - \eta)^m \\
\gamma_{p-1} &\Rightarrow& (m+1)\lambda_{p-1}P_{UCER} \nonumber \\ && =  (m+1)\lambda_{p-1}(1-(1 - \eta)^m)
\end{eqnarray}
and $\gamma_0 = \gamma_1 = \dots = \gamma_{p-2} = 0$. For this special case, the error expression ($\phi_p(0) - \phi_p^*(0)$) in Theorem \ref{Thm33} can be reduced to
\begin{eqnarray}
(m+1)\lambda_{p-1}(1-(1 - \eta)^m)\xi_{p}(p\mu_{p-1} + m \lambda_p) \\
\approx m(m+1)\eta\lambda_{p-1}\xi_{p}(p\mu_{p-1} + m \lambda_p)
\end{eqnarray}
which implies that the error in our efficient calculation of $MTTDL_p$ can be controlled by the uncorrectable error rate of the device.

 \begin{figure*}[t!]
\centering
\includegraphics[height = 60mm, width=\columnwidth]{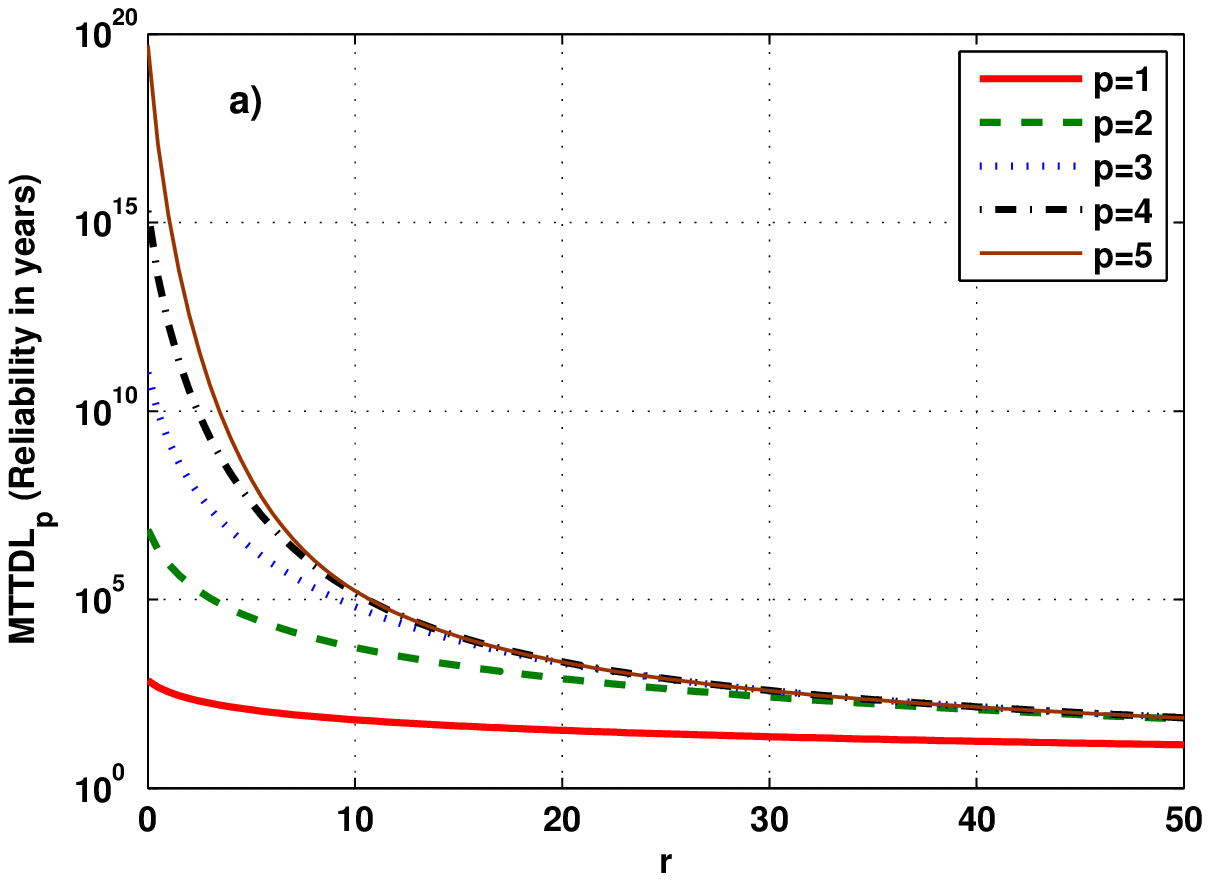}
\includegraphics[height = 60mm, width=\columnwidth]{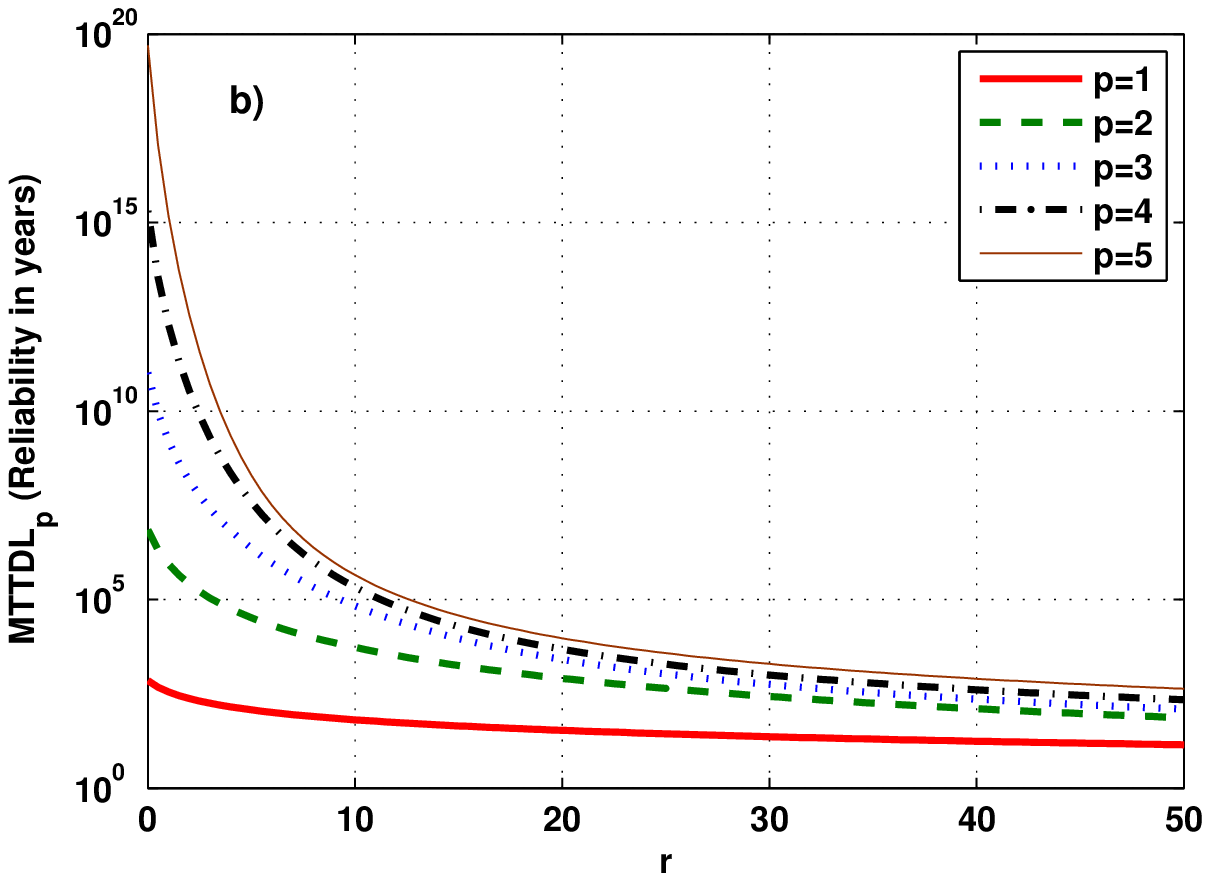}
\caption{Reliability performance results assuming dependent failure rates with exponential failure growth ($\lambda_{max} \rightarrow \infty$, shown in \textbf{a)} ) and logistic failure growth ($\lambda_{max} = 10^{-1}$, shown in \textbf{b)} ) models. We use fixed rate repairs for different number of parities and assumed a fixed data block of size $m=200$ disks in an EPG.}\label{fig:PLOT1}
\end{figure*}

\subsection{A recursive relation for $MTTDL_p$}

For a disk array having a fixed size of $n$ disks, $p$ of which store the parity information, it might be of interest to derive a recursive relationship for $MTTDL_p$. Such a relationship might be useful to predict the additional performance gain through adding extra parity disk into the system. To simplify our analysis, let us assume $\gamma_j = 0$ for $0 \leq j \leq p-1$.

\vspace{1mm}
\begin{theorem} \label{thmB}
For a disk array having a fixed size of $n$ disks, $p$ of which store the parity information, $MTTDL_p$ satisfies the following recursive relationship.
\begin{eqnarray}
MTTDL_{p+1} &=& MTTDL_p + \frac{(p+1)\mu_p}{\lambda_{p+1}(m-1)} MTTDL_p \nonumber \\ && + \frac{1}{\lambda_{p+1}(m-1)}.
\end{eqnarray}
\end{theorem}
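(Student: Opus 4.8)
The plan is to exploit the nested structure of the Markov chains behind $MTTDL_p$ and $MTTDL_{p+1}$ rather than to grind the closed form of Theorem~\ref{thm31}. Fix $n$ throughout and write $m=n-p$ for the number of data disks of the $p$-parity array, so that the $(p+1)$-parity array has $m-1$ data disks and, in the formula to be proved, $m$ is the data count of the smaller configuration. First I would observe that the $(p+1)$-parity chain is obtained from the $p$-parity chain by one local surgery: the states $n,n-1,\dots,m+1$ and all of their outgoing rates are left untouched (both the failure rate $k\lambda_{n-k}$ and the repair rate $(n-k)\mu_{n-k-1}$ out of a state labeled $k$ depend only on $n$ and $k$); state $m$ retains its repair rate $p\mu_{p-1}$ back to $n$ but now routes its failure flux $m\lambda_p$ into a brand-new state labeled $m-1$ instead of into $F$; and this new state $m-1$ has a failure edge to $F$ of rate $(m-1)\lambda_{p+1}$ and a repair edge to $n$ of rate $(p+1)\mu_p$. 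Since $\gamma_i=0$ by hypothesis, $m-1$ is the unique gateway into $F$.

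Second, I would set up a first-passage/renewal decomposition of the absorption time $T_{p+1}$ of the $(p+1)$-parity chain started from state $n$, noting that $MTTDL_{p+1}=\mathbb{E}[T_{p+1}]$. Let $S$ be the first time the chain hits state $m-1$. Before $S$ the process lives entirely on $\{n,\dots,m\}$, where by the surgery description its generator is \emph{identical} to that of the $p$-parity chain with ``hitting $m-1$'' playing the role of ``absorption at $F$''; hence $S\stackrel{d}{=}T_p$ and $\mathbb{E}[S]=MTTDL_p$. At time $S$ the chain sits at $m-1$, waits an $\mathrm{Exp}\big((m-1)\lambda_{p+1}+(p+1)\mu_p\big)$ holding time, and then moves to $F$ with probability $q=\frac{(m-1)\lambda_{p+1}}{(m-1)\lambda_{p+1}+(p+1)\mu_p}$ or back to $n$ with probability $1-q$; by the strong Markov property the residual lifetime after a return to $n$ is an independent copy of $T_{p+1}$. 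Taking expectations and using independence of $S$ from everything after the hit of $m-1$ gives
\[
\mathbb{E}[T_{p+1}]=MTTDL_p+\frac{1}{(m-1)\lambda_{p+1}+(p+1)\mu_p}+(1-q)\,\mathbb{E}[T_{p+1}].
\]

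Third, since $MTTDL_{p+1}<\infty$ (a finite absorbing chain), I would subtract $(1-q)\mathbb{E}[T_{p+1}]$ from both sides and divide by $q$, using $1/q=1+\tfrac{(p+1)\mu_p}{(m-1)\lambda_{p+1}}$; the holding-time term simplifies to $\tfrac{1}{\lambda_{p+1}(m-1)}$ and the $MTTDL_p$ term splits into $MTTDL_p+\tfrac{(p+1)\mu_p}{\lambda_{p+1}(m-1)}MTTDL_p$, which is exactly the asserted recursion. The only genuine obstacle is justifying $S\stackrel{d}{=}T_p$ together with the independence of $S$ from the subsequent holding time, coin flip, and residual lifetime; this is precisely the strong Markov property applied at the hitting time of state $m-1$, combined with the structural fact that the restricted generator on $\{n,\dots,m\}$ coincides verbatim with that of the $p$-parity chain — in particular the repair rate out of state $m$ is $p\mu_{p-1}$ in both, because $n-m=p$ in either configuration. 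As a cross-check (and an alternative proof) I would also substitute the closed forms of Theorem~\ref{thm31} for $MTTDL_{p+1}$ and $MTTDL_p$ and verify that the level-$(p+1)$ arrays $\boldsymbol{\Lambda}_x^{(p+1)}$ and the factor $\phi_{p+1}(0)$ telescope onto their level-$p$ counterparts, so that the level-$(p+1)$ numerator sum collapses to $\phi_p(0)\,MTTDL_p$ and the recursion falls out; I would keep the probabilistic derivation as the main argument since it is shorter and it explains \emph{why} the extra parity multiplies $MTTDL_p$ by $1+\tfrac{(p+1)\mu_p}{\lambda_{p+1}(m-1)}$ and adds $\tfrac{1}{\lambda_{p+1}(m-1)}$.
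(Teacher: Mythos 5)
Your proof is correct, but it takes a genuinely different route from the paper's. The paper proves Theorem \ref{thmB} in Appendix \ref{app2} by pure algebra on the closed forms of Theorem \ref{thm31}: it writes out $MTTDL_{p+1}$ for the configuration with $m-1$ data disks, uses $\phi_{p+1}(0)=\lambda_{p+1}(m-1)\phi_p(0)$, and splits $\sum_{x=0}^{p}\prod_{j\neq x}\Lambda_x^{p+1}(j)$ into $(\lambda_p m+p\mu_{p-1})$ times the level-$p$ sum plus $\prod_{j=0}^{p-1}\lambda_j(m+1+j)$, after which the recursion drops out --- essentially the ``cross-check'' you mention only at the end. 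You instead argue probabilistically: you identify the $(p+1)$-parity chain as the $p$-parity chain with the failure edge out of state $m$ re-routed into a new gateway state $m-1$, which is legitimate here because with $n$ fixed the rates $j\lambda_{n-j}$ and $(n-j)\mu_{n-j-1}$ out of every state $j\geq m$ are the same in both chains, repairs in this model jump directly back to $n$ (so the residual after a repair from $m-1$ really is an independent copy of $T_{p+1}$ started from $n$), and $\gamma_i=0$ makes $m-1$ the unique entrance to $F$; the strong Markov property at the hitting time of $m-1$ then gives the one-step renewal equation, and your bookkeeping ($1/q=1+(p+1)\mu_p/(\lambda_{p+1}(m-1))$, holding term $1/(\lambda_{p+1}(m-1))$) is right, with the only implicit standing assumptions being $m\geq 2$ and $\lambda_{p+1}>0$, which the statement needs anyway. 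What your route buys: it is independent of Theorem \ref{thm31} (unrolling the recursion even re-derives the $\gamma_i=0$ closed form), and it explains structurally why one extra parity multiplies $MTTDL_p$ by $1+(p+1)\mu_p/(\lambda_{p+1}(m-1))$ and adds one expected sojourn $1/(\lambda_{p+1}(m-1))$. What the paper's route buys: given Theorem \ref{thm31} it is a short verification that stays inside the transform-domain linear-algebra machinery already built, needs no sub-chain identification or strong Markov argument, and is the natural template when the clean ``single gateway to $F$'' picture breaks (e.g.\ nonzero $\gamma_i$).
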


\begin{proof}
  The proof is given in Appendix B.
\end{proof}

 As can be seen, this performance improvement is a function of the failure rate $\lambda_{p+1}$ and the repair rate $\mu_p$. Usually, repair rates hardly vary whereas, the failure rates increase as more disks fail in the system.  As long as the storage system satisfies $\lambda_{p+1}(m-1) \gg \max\{\mu_p(p+1),1\}$ \st{then}, we have $MTTDL_{p+1} \rightarrow MTTDL_{p}$ i.e., adding extra parity disk does not improve the reliability of the whole disk array system as the number of parity blocks tends to large numbers (e.g. exponential failure growth). On the other hand, this relationship is not necessarily satisfied in many storage settings (and failure growth models), and yet we will demonstrate that adding parity blocks shall only slightly improve the reliability of the system using more realistic failure growth models (e.g. logistic failure growth).  These arguments will be numerically supported for few failure rate growth models in subsection $F$.

 \subsection{MTTDL with initial defective disks}

In previous section, we assumed $P_n(0) = 1, P_j(0) = 0 \ \textrm{for} \ j = m+p-1,m+p-2,\dots,m, \emph{F}$ i.e., all constituent disks are operational at the start of operation. However, it is possible that when we turn the system on, some of the defective disks will not be able to operate as expected (due to infant mortality period). Similar type of behaviour can be observed in the cluster level as well \cite{Cidon}.  Thus in general, for $j = n,n-1,\dots,m, \emph{F}$, we have $P_j(0) = \epsilon_j$ where $0 \leq \epsilon_j \leq 1$.

Since the erasure code is MDS, it does not matter which disk or disks (or nodes in the cluster) were non-operational at the onset. All it matters is the number of operational disks. Suppose that we have $m$ data, $p$ parity disks with $l \leq p$ non-operational disks at the beginning of the operation. This is no different than turning the system on with $m$ data and $p-l$ parity disks all operational at the beginning. Using such an approach, we can compute $MTTDL_{p-l}$ for $l=0,1,\dots,p$. Let $\boldsymbol{\epsilon} = [\epsilon_{m+p} \ \ \dots \ \ \epsilon_{m} \ \ \epsilon_{F} ]$ be the initial probabilities of being in each state, then we have
\begin{eqnarray}
MTTDL_{p,\boldsymbol{\epsilon}} &=& \lim_{s \rightarrow 0} \sum_{l=0}^p    \textbf{A}^{-1}_{p+1}(s) \epsilon_{m+p-l} \textbf{N}^{(l)}_{p+1} \textbf{1}^T \\
&=& \sum_{l=0}^p \epsilon_{m+p-l}  MTTDL_{p-l} \\ &=& \sum_{l=0}^p \epsilon_{m+l}  MTTDL_{l} \label{DefectiveEqn}
\end{eqnarray}
where $MTTDL_0 = 1/m\lambda_0$ and $\epsilon_{F} + \sum_{l=0}^p \epsilon_{n-l} = 1$. Note that we slightly abused the notation and used the following equation for
convenience
\begin{eqnarray}
MTTDL_{p} = MTTDL_{p,[1 \ 0 \ \dots \ 0]_{1 \times p}}
\end{eqnarray}

\begin{figure*}[t!]
\centering
\includegraphics[height = 60mm, width=\columnwidth]{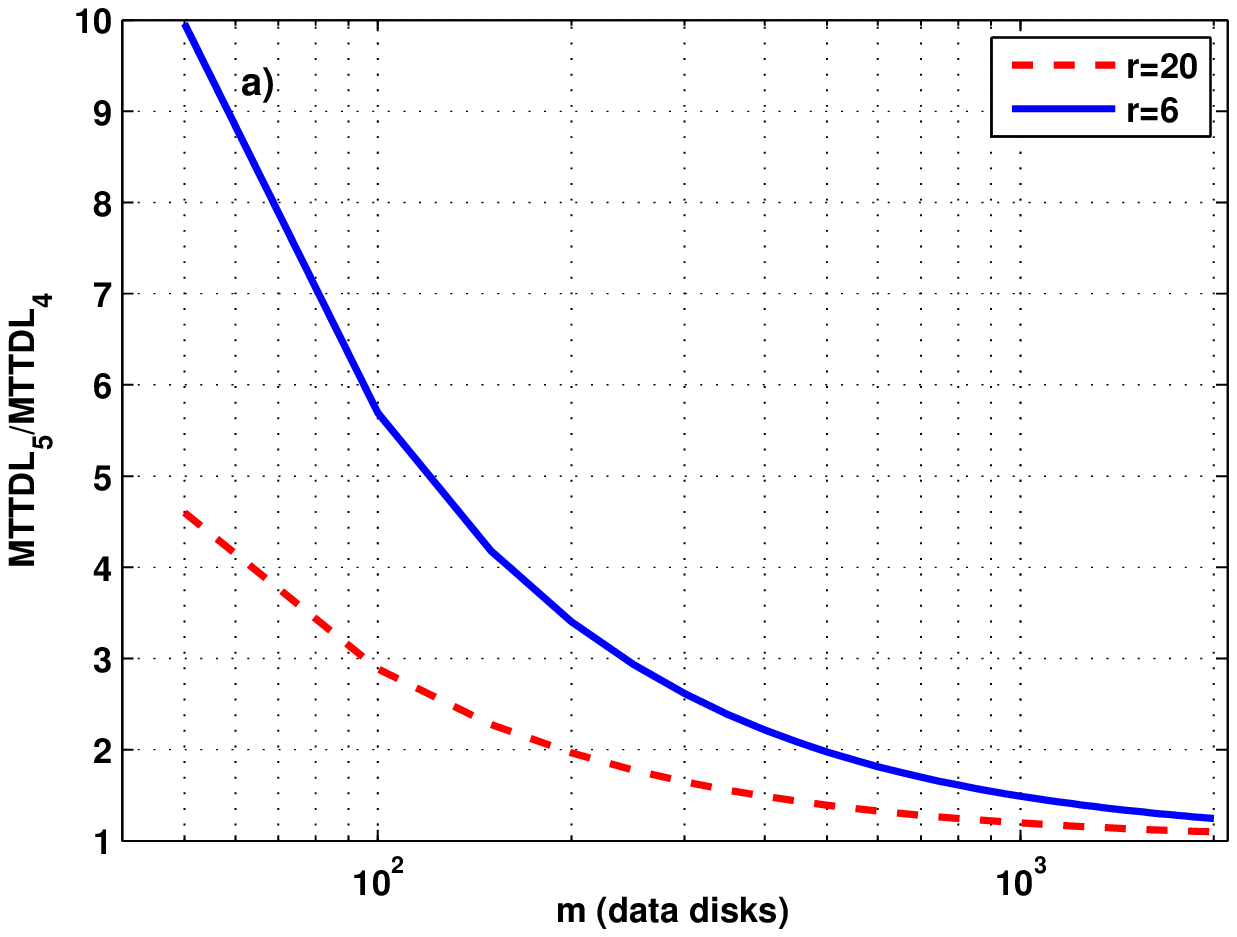}
\includegraphics[height = 60mm, width=\columnwidth]{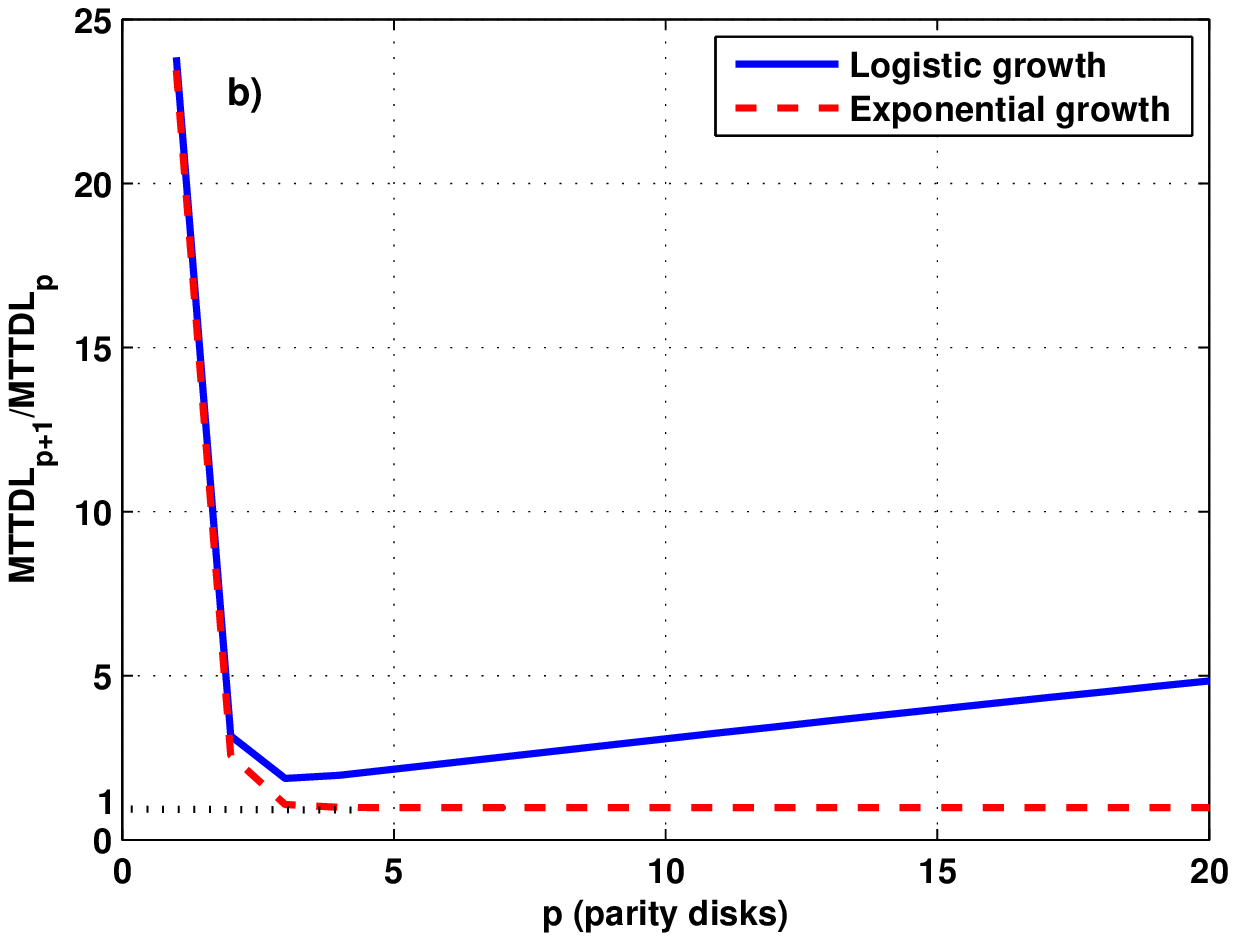}
\caption{ \textbf{a)} $MTTDL_5/MTTDL_4$ ratio, the effect of adding one more parity assuming logistic failure growth ($\lambda_{max} = 10^{-1}$) and using fixed rate repairs. We assumed variable data block sizes. \textbf{b)} $MTTDL_{p+1}/MTTDL_p$ ratio, the effect of adding one more parity assuming exponential failure growth ($\lambda_{max} \rightarrow \infty$) and logistic failure growth ($\lambda_{max} = 10^{-1}$) models as a function of number of parities $p$. We set $m=200$ disks. }\label{fig:PLOT11}
\end{figure*}

\subsection{Real life  failure growth and repair rates}

In this study, we assume disks are in their useful life phase \cite{Modarres}.  A general trend would be to use increased failure rates as we have more failed component disks within the same EPG. Some of the real life observations demonstrate that after a disk failure, the probability of having another failure grows exponentially \cite{Lin}, \cite{GIBSON2}. This suggests that it is reasonable to assume an exponential growth  in the rate of failures after the last failure event. However, after a particular number of failures happen (as we deplete the number of resource/disks), we would expect this growth to stabilize to a constant before the wear-out phase is entered. Such a growth phenomenon is known as \emph{logistic growth} of failure rates \cite{Balakrishnan}. We express the logistic growth for $i=0,\dots,p$ with the following function,
\begin{eqnarray}
\lambda_{i} = \frac{\lambda_0 e^{ir^* }}{1 + (e^{ir^*} - 1) \frac{\lambda_0}{\lambda_{max}}}
\end{eqnarray}
where $\lambda_{max}$ is the maximum failures per hour, i.e., maximum failure rate at which a disk might be failing. If there was no limit on the rate of growth i.e.,  $\lambda_{max} \rightarrow \infty$, we will have the exponential growth (also known as Malthusian growth \cite{Malthus}) expressed as $\lambda_{i} = \lambda_0 e^{ir^*} $ implying the recursive relationship $\lambda_{i+1} = \lambda_{i} e^{r^*} $ for some fixed $r^*>0$. Conventionally, exponential growth is defined with the recursive relationship $\lambda_{i+1} = \lambda_{i} (1 + r) $ for some fixed $r>0$. Therefore, for notation convenience and better visualization we present our results in terms of $r$ using the transformation $\ln(1+r) = r^*$. We also note that it is reasonable to assume that the disk systems are subject to periodic maintenance and therefore we assume a fixed repair rate $\mu = \mu_0 = \dots = \mu_{p-1}$ i.e., $\mu$ repairs per hour.
In general, the repair rate is a function of the erasure code construction, the period with which the system checks for failures, the time it takes to transfer necessary information to recompute the failed disk data and re-balance the distributed storage system. Thus, the repair operation can be expedited by selecting appropriate erasure codes and system maintenance parameters.

%This, in general, is a design parameter of the erasure code as well as the storage monitoring that periodically checks for failures, fixes the problem and re-balances the distributed system.

\begin{table*}[h]
\small
\centering
\caption{A generic (18,12) MDS code and few Pyramid codes with the associated recoverability/efficieny characteristics (\cite{Huang}) \label{table01}.}{
\begin{tabular}{|l|l|l|l|l|l|l|l|l|}
\hline
\multicolumn{2}{|c|}{Number of failed symbols/blocks}                 & 0   & 1    & 2    & 3    & 4     & 5     & 6     \\ \hline
\multirow{2}{*}{Generic MDS Code}               & Recoverability (\%) & 100 & 100  & 100  & 100  & 100   & 100   & 100   \\ \cline{2-9}
                                                & Avg. read overhead  & 1.0 & 1.61 & 2.22 & 2.83 & 3.44  & 4.06  & 4.67  \\ \hline
\multirow{2}{*}{Pyramid Code (PC)}       & Recoverability (\%) & 100 & 100  & 100  & 100  & 100   & 94.12 & 59.32 \\ \cline{2-9}
                                                & Avg. read overhead  & 1.0 & 1.28 & 1.56 & 1.99 & 2.59  & 3.29  & 3.83  \\ \hline
\multirow{2}{*}{Generalized PC (GPC)} & Recoverability (\%) & 100 & 100  & 100  & 100  & 100   & 94.19 & 76.44 \\ \cline{2-9}
                                                & Avg. read overhead  & 1.0 & 1.28 & 1.56 & 1.99 & 2.59  & 3.29  & 4.12  \\ \hline
\multirow{2}{*}{GPC w/o global symbols}         & Recoverability (\%) & 100 & 100  & 100  & 100  & 97.94 & 88.57 & 65.63 \\ \cline{2-9}
                                                & Avg. read overhead  & 1.0 & 1.28 & 1.56 & 1.87 & 2.32  & 2.93  & 3.85  \\ \hline
\end{tabular}}
\end{table*}
\normalsize

Using the previous assumptions and our closed form expressions derived earlier, let us provide few results for $MTTDL_p$ for $p=1,2,3,4$ and $5$ using different $r$ values and growth rates. Let us assume we have $\lambda_0 = 4 \times 10^{-6}$ failures and $\mu=4$ disk repairs per hour in an EPG that contains $m=200$ disks for raw data storage. All disks are assumed to be operational at the beginning.

Fig. \ref{fig:PLOT1}.a demonstrates that with increasing parity, we dramatically increase the reliability values if the failure rates do not change as we have more and more disk failures in the EPG i.e., independent failure rates. However in case of dependent failures using exponential failure growth model with increasing $r$, the MDS parity schemes become quickly obsolete in that adding more parity is nothing but a waste of resources. For example, adding the fifth parity disk into the disk array which is already protected by four parity disks do not provide any improvement in terms of average reliability statistics when $r=20$. On the other hand, Fig. \ref{fig:PLOT1}.b shows that if we have a logistic failure growth for the constituent disks ($\lambda_{max} = 10^{-1}, r=20$), adding parity helps improve the system performance but this performance improvement is not substantial as predicted by independent failure models \cite{Burkhard}.

In Fig. \ref{fig:PLOT11}.a, we show how much we gain by adding one more parity to an EPG which already has four parity disks for failure protection. We vary the EPG size to see the effect of the EPG size on the reliability performance for two different values of $r$. We observe that as $m$ gets larger, adding an extra parity become almost useless for both values of $r$.  In Fig. \ref{fig:PLOT11}.b, we plot the relative reliability gain of an EPG ($m$ = 200 disks) protected by $p$ parity disks, by adding one more parity disk to the array. We assumed both exponential and logistic failure growths and observed that with exponential growth, there is a limit to the number of parity disks that will be useful in terms of MTTDL performance. After adding four parity disks, we reach the maximum number of parity disks that can benefit the disk array in terms of failure protection. The story changes slightly if we assume logistic failure growth model. As can be seen, adding more parity disks beyond four only slightly helps the reliability performance of the disk array. Although it is not shown explicitly, in order to get the same performance gain we obtain by going from single parity to double parity protection using logistic growth model, we need to have almost 120 additional parity disks. This demonstrates an instance of a very inefficient and possibly very complex protection scheme since we only have $m=200$ data disks and in order to get some real gains by adding parity disks beyond two, we need almost 120 parity disks.

\subsection{Average Read Overhead and Repair Rates}

Systematic erasure codes include the original data blocks as part
of the coded blocks. Thus, accessing any data block can be directly served by the storage
system without further computation. However, if the data block is unavailable, the read operation
has to access a subset of the remaining blocks to recover/compute the missing data block.

The metric \textit{average read overhead}, denoted as $\Phi_j(n)$, represents the average number of extra whole device readings as
an overhead in order to access any unavailable data block (degraded reads) when there are $j$ block failures with a code block length of $n$. Let us consider an example of one block failure ($j=1$) in the $(18,12)$ MDS code to illustrate how this metric is computed. If the failure is a redundant/parity
block ($6/18$ chance), then the data blocks can be accessed directly, so the average
read overhead is 1. Otherwise, the failure shall be a data block ($12/18$ chance) and the read
overhead is twelve for the failed data block and one for the rest of the eleven data blocks.
Hence, the average read overhead is $(12+11)/12$. Altogether, the average read overhead is given by
$\Phi_1 = 1 \times 6/18 + (12+11)/12 \times 12/18 \approx 1.61$.

The following theorem generalizes the average read overhead for any $(n,m)$ systematic block code with average access pattern $\{ S_k^{(n)} \}$ where $k$-th data block can be computed by accessing at least a subset $S_k^{(n)}$ of available blocks for $1 \leq k \leq m$.

\vspace{1mm}
\begin{theorem} \label{thm35}
The average read overhead for a generic $(n,m)$ systematic block code with fixed rate $m/n$ and access pattern $\{S_k^{(n)}\}$ when we have $j \leq n-m$ failures is given by the following generalized expression
\begin{eqnarray}
\Phi_j(n) = \sum_{i=0}^j \frac{(i\overline{S}(n)+m-i)\binom{m}{i}\binom{n-m}{j-i}}{m\binom{n}{j}} \label{eqn50}
\end{eqnarray}
where $\overline{S}(n) = \sum_k |S_k^{(n)}|$ is the average number of block accesses. Furthermore, $\Phi_j(n)$ can be simplified as follows for fixed rate $m/n$ generic $(n,m)$ block code as $n \rightarrow \infty$.
\begin{eqnarray}
\Phi_j(n) \rightarrow 1 + \frac{(\overline{S}(n)-1)j}{n}
\end{eqnarray}
\end{theorem}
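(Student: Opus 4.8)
The plan is to read $\Phi_j(n)$ as an expectation over two independent sources of randomness: a uniformly random choice of which $j$ of the $n$ coded blocks are unavailable, and a uniformly random choice of which of the $m$ data blocks is being requested (reading a \emph{parity} block never incurs a degraded-read penalty, so it is enough to average over data-block requests). The natural first move is to condition on $I$, the number of the $j$ lost blocks that happen to be data blocks. Given $I=i$, there are $m-i$ intact data blocks, each servable directly at overhead $1$, and $i$ lost data blocks, each of which must be reconstructed from its access set $S_k^{(n)}$ at overhead $|S_k^{(n)}|$. Writing $\Phi_j(n)=\mathbb{E}\big[\Phi_j(n)\mid I\big]$ is the backbone of the argument.

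For the distribution of $I$: since the lost set is a uniformly random $j$-subset of the $n$ blocks, of which $m$ are data blocks, $I$ is hypergeometric, so $\Pr[I=i]=\binom{m}{i}\binom{n-m}{j-i}/\binom{n}{j}$ (terms with $i>m$ or $j-i>n-m$ vanish on their own, which is why the sum may run from $i=0$ to $j$). Next I would compute the conditional expected overhead given $I=i$: averaging over the requested data block, the intact ones contribute $(m-i)\cdot 1$, while the lost ones contribute the sum of $|S_k^{(n)}|$ over the $i$ lost data-block indices; because that index set is itself a uniformly random $i$-subset of $\{1,\dots,m\}$ and is independent of the numbers $|S_k^{(n)}|$, its expectation is $i$ times the per-block average $\overline{S}(n)$. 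Hence $\mathbb{E}\big[\Phi_j(n)\mid I=i\big]=\big(i\,\overline{S}(n)+(m-i)\big)/m$. Multiplying by $\Pr[I=i]$ and summing over $i$ reproduces exactly Eqn. (\ref{eqn50}); it is worth checking this against the worked $(18,12)$ example, where $\overline{S}(18)=12$ and the $i=0$ and $i=1$ terms come out to $6/18$ and $23/18$.

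For the limiting form, rather than re-expanding the sum I would observe that the formula just derived rearranges to $\Phi_j(n)=1+\dfrac{\overline{S}(n)-1}{m}\,\mathbb{E}[I]$, and that the hypergeometric mean is $\mathbb{E}[I]=jm/n$. Substituting collapses the right-hand side to $1+(\overline{S}(n)-1)j/n$, which is the asserted asymptotic expression (in fact an exact identity once the $|S_k^{(n)}|$ are treated as fixed quantities).

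I expect the only genuine obstacle to be the step that replaces the average of $|S_k^{(n)}|$ over the lost data blocks by $\overline{S}(n)$. This is immediate for MDS codes, where $|S_k^{(n)}|=m$ for every $k$ irrespective of the failure pattern; for a general systematic code one must make sure the access set used to rebuild a lost block consists of blocks that are themselves available, so that $|S_k^{(n)}|$ does not implicitly depend on the realized failure set — and then the uniformity and independence of the random failure set is exactly what legitimizes the substitution and keeps the identity clean.
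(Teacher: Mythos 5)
Your argument is correct, and for the first identity it is in substance the same as the paper's Appendix C: the paper also conditions on the number $i$ of failed data blocks, weights by the hypergeometric term $\binom{m}{i}\binom{n-m}{j-i}/\binom{n}{j}$, and replaces the sum of $|S_c^{(n)}|$ over the lost data indices by $i\,\overline{S}(n)$ — it does so by summing over all $\binom{m}{i}$ failure combinations (each index appearing in a fraction $i/m$ of them) rather than by your phrasing via a uniformly random $i$-subset, but that is the same linearity-of-expectation step in counting clothes. (Note your per-block-average reading of $\overline{S}(n)$ is the right one; the displayed definition $\overline{S}(n)=\sum_k |S_k^{(n)}|$ without the $1/m$ is a slip, as the paper's own manipulation $\binom{m}{i}\frac{i}{m}\sum_k|S_k^{(n)}|=\binom{m}{i}\,i\,\overline{S}(n)$ and the MDS value $\overline{S}(n)=m$ both show.) Where you genuinely diverge is the second claim: the paper proves it by passing to the limit of the hypergeometric weights, $\binom{m}{i}\binom{n-m}{j-i}/\binom{n}{j}\to\binom{j}{i}(m/n)^i(1-m/n)^{j-i}$, and then evaluating the binomial mean, whereas you rewrite $\Phi_j(n)=1+\frac{\overline{S}(n)-1}{m}\,\mathbb{E}[I]$ and plug in the exact hypergeometric mean $\mathbb{E}[I]=jm/n$. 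Your route is shorter and strictly stronger: it shows $\Phi_j(n)=1+(\overline{S}(n)-1)j/n$ holds exactly for every $n$ (consistent with the worked $(18,12)$ MDS example, $29/18$), so the ``$n\to\infty$'' qualifier in the theorem is not actually needed once the $|S_k^{(n)}|$ are treated as fixed; the paper's limit argument only recovers the asymptotic statement. Your closing caveat — that the access sets must consist of available blocks for $|S_k^{(n)}|$ to be failure-independent — is a real modeling assumption that the paper makes implicitly as well, so flagging it is appropriate rather than a gap.
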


\begin{proof}
The proof is given in Appendix C.
\end{proof}

Using the result of Theorem \ref{thm35}, one can deduce that if the average number of block accesses is constant with growing $n$ \cite{Gummadi}, then the average access overhead will approach to the optimal value of 1 irrespective of how many failures there are. On the other extreme for MDS codes we have $\overline{S}(n) = m$ for all $n$. Thus, $\Phi_j(n) \rightarrow 1 + mj/n$ which clearly shows the relationship between the average read overhead and the rate of the code.

Although the average read overhead is not the only metric affecting the repair process, it is usually the dominant one. Since the more data to access and read for the repair, the more time it takes to repair, it is reasonable to assume repair rates to be inversely proportional to this metric. Inspired from the logistic growth, let us define the repair rate with respect to an MDS code
\begin{eqnarray}
\mu_j &\triangleq& \delta\mu\frac{\log((j+1)\Phi_{j+1}^{MDS}(n))}{\log((j+1)\Phi_{j+1}^{Pyd.}(n))} \nonumber \\
&=&  \log_{(j+1)\Phi_{j+1}^{Pyd.}(n)}\Big((j+1)\Phi_{j+1}^{MDS}(n)\Big)^{\delta\mu}
\end{eqnarray}
where $\mu$ is the nominal rate of the repair per device and $\delta$ is a constant used to model the relative bandwidth constraint (with respect to an MDS code, in which it is normalized
to be unity) to reflect on the repair rates based on average read overhead metric. We denote the average read overhead for MDS code as $\Phi_{j+1}^{MDS}(n)$ and for Pyramid codes as $\Phi_{j+1}^{Pyd.}(n)$. Clearly, this formulation
assumes an inverse exponential relationship between the nominal repair rate and the
average read overhead with respect to an MDS code.

\subsection{A case study: Pyramid Codes for Storage}

An interesting case study would be to apply the generalized Markov model  to one of the modern erasure codes such as Pyramid Codes (PC) of Microsoft Azure Storage \cite{Huang}. Pyramid codes are designed to improve the recovery performance for small scale
device failures and have been implemented in archival storage \cite{Wildani}. Pyramid codes are not MDS codes but are
constructed from standard MDS codes by creating newer parity symbols from subsets of
existing data blocks in order to trade-off the recoverability with coding overhead and the average read
overhead, which are important parameters to optimize for a storage application.

In principle, a pyramid code constructs local data sets and generates local parities for these sets based on MDS codes. Additionally, global parities are also generated to span all of the data set for stronger protection against failures. More details about the various construction techniques for pyramid codes can be found in \cite{Huang}. Let us use a ($n=16$, $m=12$) MDS code as the basis for a set of $(18,12)$ pyramid codes given in table \ref{table01}.

We shall either use the computed values of $\Phi_j(n)$ in \cite{Huang} or compute them using Theorem \ref{thm35} (both attains the same values) for our MTTDL evaluations. Using the generalized Markov model of the previous section, let us further assume a homogenous repair strategy is used
in the system. Some results are shown in table \ref{table02} using a nominal
repair rate $\mu = 1/168$ (1 week mean repair time), $\eta = 10^{-3}$ and $\delta = 20$. From the table, we observe that basic
and generalized pyramid codes provide better durability numbers thanks to their efficient repair mechanisms. For a given space efficiency, this is achieved  due to improved access efficiency by sacrificing the recoverability. It has been shown that sacrificing small recoverability can help the system to exploit a huge advantage in average read overhead and hence the MTTDL. On the other hand, as $\lambda$ gets close to zero ($\forall i, \lambda_i = \lambda$), the frequency of repairs go down and hence the advantage of pyramid codes with global symbols diminishes. This can be observed with the MTTDL results given
for $\lambda = 1/1200000$. However, note that basic and generalized PCs still outperform GPC without global symbols, demonstrating the fact that global symbols are quite crucial in pyramid codes for maintaining a desired level of durability. It is also important to notice that the recoverability performance of GPC without global symbols is adversely affected in a dramatic fashion (particularly some 4--failure combinations could not be recovered, see Table \ref{table01}) which leads to degradation in MTTDL performance.

\begin{table}[t!]
\footnotesize
\centering
\caption{MTTDL (hours) for basic MDS as well as various Pyramid codes \cite{Huang} \label{table02}. $K = 10^3$ and $M = 10^6$.}{
\begin{tabular}{|l|l|l|l|l|}
\hline
\multicolumn{1}{|c|}{Failure/Nominal Repair rates} & $\lambda=\frac{1}{200K}$ & $\lambda=\frac{1}{500K}$ & $\lambda=\frac{1}{1.2M}$ \\ \hline
MDS Code            & 2.2e+15        & 6.4e+17        & 1.3e+20        \\ \hline
Basic PC (BPC)          & 1.3e+17        & 5.2e+18        & 1.7e+20        \\ \hline
Generalized PC (GPC)   & 1.32e+17       & 5.26e+18       & 1.76e+20       \\ \hline
GPC w/o global syms           & 1.83e+14       & 3e+15          & 4.1e+16        \\ \hline
\end{tabular}}
\end{table}
\normalsize

\section{Disk Arrays, Distributed Storage and Disk Allocation strategies} \label{Distributed}

In real life storage applications, we have many installations of  disk arrays. This ultimately means more failures and possibly more dependency. Therefore, we need efficient coding mechanisms that can help us have decorrelated disk failures to obtain the gains predicted by independent failure models. For this, we will consider a simple distributed storage network example in which disks are allocated to different nodes of the network.

In this study, we compare two allocation strategies of multiple installations of 1-D MDS disk arrays into a given storage network.  These schemes are summarized in Fig. \ref{fig:DIST2}  and Fig. \ref{fig:DIST}. In the former one, the array elements are placed within the same network node whereas the latter design allocates each component disk that belongs to the same EPG to a different network node in an attempt to minimize the correlated failures. In order for a fair comparison, without loss of generality we assume in each allocation policy, nodes contain the same number of disks and hence we assume the number of EPGs is $z = m + p = n$ in the rest of our analysis/simulations. Note that for  $t$-MDS arrays, such an allocation might not be trivial. An example allocation policy for a particular 2-D MDS disk array system is shown in Section 4.3.

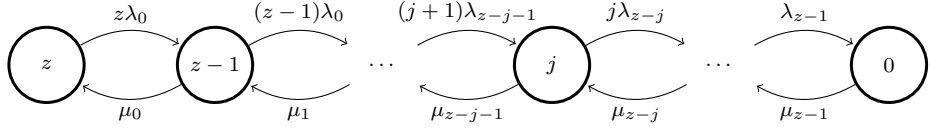
\begin{figure*}[t!]
\centering
\begin{tikzpicture}[node distance=1.2cm]
    % Add the states
    \footnotesize
    \node[state, minimum width=1cm, line width=0.4mm]             (n) {$ z $};
    \node[state, minimum width=1cm, line width=0.4mm, right=of n] (n1) {$z-1$};
    \node[state, minimum width=1cm, line width=0.4mm, right=of n1, draw=none] (n2) {$\dots$};
    \node[state, minimum width=1cm, line width=0.4mm, right=of n2] (j) {$j$};
    \node[state, minimum width=1cm, line width=0.4mm, line width=0.4mm, right=of j, draw=none] (n3) {$\dots$};
    \node[state, minimum width=1cm, line width=0.4mm, right=of n3] (m1) {$0$};
    %\node[state, minimum width=1cm, line width=0.4mm, right=of m1] (0) {$m$};
    %\node[state, minimum width=1cm, line width=0.4mm, below=of m] (f) {$F$};

    % Connect the states with arrows
    \draw[every loop]
        (n) edge[bend left, auto=left] node {$z\lambda_0$} (n1)
        (n1) edge[bend left, auto=left]  node {$\mu_0$} (n)
        (n1) edge[bend left, auto=left] node {$(z-1)\lambda_0$} (n2)
        (n2) edge[bend left, auto=left] node {$\mu_{1}$} (n1)
        (n2) edge[bend left, auto=left] node {$(j+1)\lambda_{z-j-1}$} (j)
        (j) edge[bend left, auto=left] node {$j\lambda_{z-j}$} (n3)
        (n3) edge[bend left, auto=left] node {$\mu_{z-j}$} (j)
        (j) edge[bend left, auto=left] node {$\mu_{z-j-1}$} (n2)
        (n3) edge[bend left, auto=left] node {$\lambda_{z-1}$} (m1)
        (m1) edge[bend left, auto=left] node {$\mu_{z-1}$} (n3);
\end{tikzpicture}
\caption{A dependent Markov failure model for network storage nodes. In general, we have the relationship $\lambda_0 < \lambda_1 < \dots < \lambda_{z-1}$ to model the increasing failure
rates as we have more and more disk failures within the same column of disks. }\label{fig:MARKOV2}
\end{figure*}

\subsection{Horizontal Allocation of Disks}

Let us assume that we have $z$ installations of the disk array shown in Fig. \ref{fig:RAID}.a. In Fig. \ref{fig:DIST2}, we show how the horizontal allocation of disks is performed in which the array elements are placed within the same network node. Assuming independence between different network nodes\footnote{Since in real life applications, storage network nodes are placed at different physical location and have distinct hardware support. They are also exposed to different environmental conditions with high probability.} and for any EPG having $MTTDL_p^{Hor.}$,  we would like to find the mean time to data loss for the whole storage system.

In a number of experimental observations, the Weibull and gamma distributions are shown to give better approximations to the real lifetime failure characteristics of component disks \cite{Schroeder}.   For the sake of being more realistic, we assume an exponential Time To Failure (TTF) distribution for each component disk and a Weibull TTF distribution for each EPG.  The $i$th EPG  Weibull distribution is given by
\begin{eqnarray}
W_i(t;\omega_i,k_i) &=& \omega_ik_i (\omega_i t)^{k_i-1} e^{-(\omega_it)^{k_i}} \nonumber \\ \mathbb{E}[W_i] &=& \frac{1}{\omega_i} \Gamma \left(1 + \frac{1}{k_i}\right)
\end{eqnarray}
where $k_i>0$ are the shape parameters,  $1/\omega_i > 0$ are the scale parameters of the distribution and $\Gamma(x) = \int_{0}^{\infty} t^{z-1}e^{-t}dt$ is the gamma function.

\begin{figure}[t!]
\centering
\includegraphics[height = 58mm, width=0.8\columnwidth]{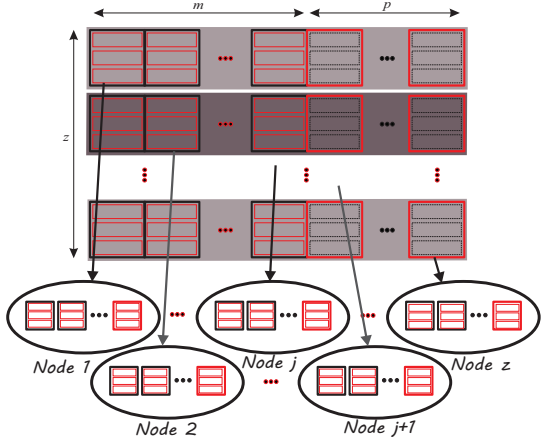}
\caption{Horizontal Allocation: A large number of installations of disk arrays are used commercially to store data. A trivial allocation of EPG arrays into the network nodes is shown. }\label{fig:DIST2}
\end{figure}

Since any EPG failure will result in the whole system failure,  and if we let $W_s$ be the random variable that describes the TTF for the whole system, we have $W_s = \min\{W_1,\dots,W_z\}$. In other words, if any EPG fails, there is no way for the whole storage system to recover the failure. We have for any $t \geq 0$,
\begin{eqnarray}
Pr(W_s > t) &=& Pr(\min\{W_1,\dots,W_z\} > t) \nonumber \\
&=& Pr(W_i > t, i=1,\dots,z) \nonumber \\
&=& \prod_{i=1}^z Pr(W_i > t)  \nonumber \\
&=& \prod_{i=1}^z e^{-(\omega_i t)^{k_i}} = e^{-\sum_{i=1}^z (\omega_i t)^{k_i}} \label{Lasteqn}
\end{eqnarray}

Assuming all EPGs share the same scale parameter $k$, Eqn. (\ref{Lasteqn}) reduces to
\begin{eqnarray}
Pr(W_s > t) = e^{-t^k\sum_{i=1}^z \omega_i^k} = e^{-(\sqrt[k]{\sum_{i=1}^z \omega_i^k}t)^k}
\end{eqnarray}

This implies that the TTF for the whole system has a Weibull distribution with shape parameter $k$ and scale parameter $1/\omega_s = (\sum_{i=1}^z \omega_i^k)^{-1/k}$. The mean time to data loss is given by $\mathbb{E}[W_s] = (1/_s)\Gamma(1 + 1/k)$. Since we assume independence between different network storage nodes and if all EPGs have the same mean time to data loss $MTTDL_p^{Hor.}$, for a given fixed $k$, we have
\begin{eqnarray}
\omega_i = \omega = \frac{\Gamma(1 + 1/k)}{MTTDL_p^{Hor.}} \label{Eqnlamda}
\end{eqnarray}

Using Eqn. (\ref{Eqnlamda}), we deduce
\begin{eqnarray}
\mathbb{E}[W_s] = \frac{\Gamma(1 + 1/k)}{\omega_s}  = \frac{\Gamma(1 + 1/k)}{\omega z^{1/k}} =  \frac{MTTDL_p^{Hor.}}{z^{1/k}}
\end{eqnarray}
which is in accordance with the previously predicted results with $k=1$, i.e., using exponential EPG TTF distributions \cite{RAID}.

\subsection{Vertical Allocation of Disks}

In Fig. \ref{fig:DIST}, we show how the vertical allocation of $z$ (we let $z=n$ for simplicity, it can be generalized to $z>n$) installations of disk arrays are deployed in which each constituent disk that belongs to the same EPG is placed in a different network node. An allocation policy is adapted such that the following criterion is satisfied.

\vspace{1mm}
\begin{definition}
  We define \emph{criterion} to be the case in which a total of $z$ disks are allocated to the different nodes of the storage network consisting of $z$ nodes such that each network node contains only one disk belonging to a particular EPG.
\end{definition}

In Fig. \ref{fig:DIST}, a trivial allocation of disks is considered so that the criterion is satisfied. Other allocations are possible. There are $z$  EPGs and we assume disks in the same storage network node are subject to dependent failure rates because of the same hardware support and environmental conditions etc. On the other hand, we have the storage nodes operating fairly independently.

\begin{figure}[t!]
\centering
\includegraphics[height = 60mm, width=0.7\columnwidth]{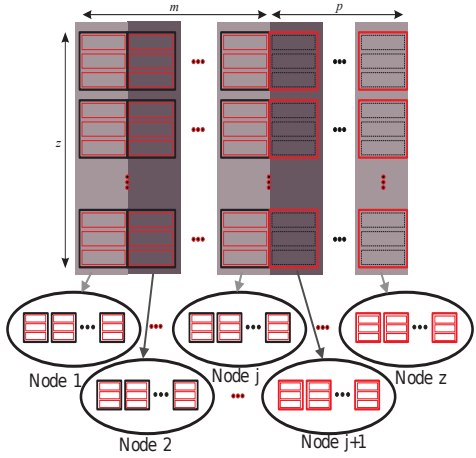}
\caption{\emph{Vertical allocation}: A large number of installations of disk arrays are used to store data. The figure is showing the allocation of component disks of EPGs by putting the first vertical column to the first network node, second column of disks to the second network node and so on.  Other allocations are possible as long as the \emph{\textbf{criterion}} is satisfied.}\label{fig:DIST}
\end{figure}

\begin{table}
\begin{center}
\caption{Disk failure events from  an EPG perspective and associated probabilities\label{Table1}}{
\begin{tabular}{*{4}{c}r}
\hline
\begin{tabular}{@{}c@{}}Number of disk \\ failures in a column ($i$)\end{tabular}  &
\begin{tabular}{@{}c@{}}$\theta_i$= Probability of disk \\ having failure rate $\lambda_i$ \end{tabular}   &
Probability of disk failure  \\
\hline
0 & $z\pi_z/z$ & 0 \\
1   & $(z-1)\pi_{z-1}/z$ & $\pi_{z-1}/z$  \\
2   & $(z-2)\pi_{z-2}/z$ & $2\pi_{z-2}/z$  \\
3   & $(z-3)\pi_{z-3}/z$ & $3\pi_{z-3}/z$  \\
\vdots    & \vdots & \vdots   \\
$z-1$    & $\pi_{1}/z$ & $(z-1)\pi_{1}/z$   \\
$z$    & 0 &  $z\pi_{0}/z$   \\
\hline
\end{tabular}}
\end{center}
\end{table}

Since there is no parity protection across the EPGs (vertical direction), we assume a failure model across the disks that is shown in Fig. \ref{fig:MARKOV2}. The failure model shown in Fig. \ref{fig:MARKOV2} is a truncated generalized birth--death process in which each state label designate the number of operational disks in the network node. For large $z$, the steady state probabilities ($\pi_i$ is the steady state probability of being in state $i$) of such a process is approximated well for $j=0,1,\dots,z$ by \cite{Wang}
\begin{eqnarray}
\pi_{z-j} \approx \frac{\prod_{m=0}^{j-1} \frac{(z-m)\lambda_m}{\mu_m}}{1 + \sum_{k=1}^{z} \prod_{s=0}^{k-1} \frac{(z-s)\lambda_s}{\mu_s}} \label{PIexp}
\end{eqnarray}

For any vertical column of disks (node), if we have all $z$ disks operational (state $z$ in Fig. \ref{fig:MARKOV2}), then all disks will have a failure rate of $\lambda_0$. Similarly, if we have $z-1$ disks operational (state $z-1)$, then all disks will have a failure rate of $\lambda_1$ and so on. From an EPG perspective therefore, any disk is subject to one of the failure rates \{$\lambda_i, i=0,1,\dots,z-1,z$\} with steady state probabilities \{$\pi_{z-i}, i=0,1,\dots,z-1,z$\}, respectively. We define $\lambda_z \triangleq 0$ for completeness, i.e., if all disks are already failed, the rate of failure for remaining disks is zero because there remains no disk to fail.

Suppose that the probability of a failure of a disk in an EPG is $\rho_i$ given that we have
$i$ disk failures in any column of disks (node).   Since for each disk  it is equally likely to have the failure, we have  $\rho_i = i/z$. Similarly, the probability a disk not failing given  that we have
$i$ disk failures in any column of disks of Fig. \ref{fig:DIST} is $1-\rho_i$. Since the probability of having $i$ disk failures is $\pi_{z-i}$, the unconditional probabilities will be given by Table \ref{Table1} where we list all the possibilities. The total disk failure probability $\rho_F$  is then given by averaging over the number of disk failures $i$,
\begin{eqnarray}
% \nonumber % Remove numbering (before each equation)
\theta_F = \sum_{i=0}^z \rho_i Pr\{\textrm{$i$ disk failures}\} = \sum_{i=0}^z \frac{i\pi_{z-i}}{z}
\end{eqnarray}

Thus, we have the probability distribution given by the probabilities $\{\theta_0, \theta_1, \dots, \theta_z, \theta_F\}$ satisfying $\sum_{i} \theta_i + \theta_F = 1$.
Before the EPG decoding attempts recovery, suppose that we have $\nu$ of such disk failures (conditioning on $\nu$, assuming all different $\nu$ combinations are equally likely). The probability of having  $\nu$ of such failures is binomially distributed (due to node independence) and given by
\begin{eqnarray}
\epsilon_{n-\nu} = \binom{n}{\nu} \rho^\nu (1-\rho)^{n-\nu}
\end{eqnarray}

Since we are left with $n-\nu$ operational disks, each failing with one of the failure rates $\{\lambda_0, \lambda_1, \dots, \lambda_{z-1}\}$ with probabilities $\{\pi_z, (z-1)\pi_{z-1}/z, \dots, \pi_1/z \}$ i.e., $Pr\{$A disk fails with rate $\lambda_j\} = \theta_j = (z-j)\pi_{z-j}/z$. Thus, the disks in an EPG will fail independently with the average failure rate\footnote{This is a generalization of subdividing a Poisson process. Suppose each arrival in a Poisson process  $\{N(t), t \geq 0\}$ of rate $\lambda$, is sent into one of the
two arrival processes $\{N_1(t), t \geq 0\}$  and $\{N_2(t), t \geq 0\}$ with probabilities $p$ and  $1-p$, respectively. The resulting processes are Poisson with rates $\lambda_1 = p\lambda$ and $\lambda_2 = (1-p)\lambda$, respectively. Thus, the average rate will give us the rate of the original undivided Poisson process, i.e., $\lambda_{avg} = \lambda_1 + \lambda_2 = \lambda$. It is straightforward to extend this argument to generalized subdivisions of the Poisson process.}
\begin{eqnarray}
\lambda_{avg} = \sum_{j=0}^{z} \lambda_j \theta_j + 0.\theta_F = \sum_{j=0}^{z-1} \lambda_j (z-j)\pi_{z-j}/z \label{AvgLamda}
\end{eqnarray}

% We show in Appendix 2 that for large number of installations i.e., $z$ large and constant repair rates with exponential failure growth, the average failure rate can accurately be approximated as
%\begin{eqnarray}
%\lambda_{avg} \approx \frac{ \sum_{j=0}^{z-1} \lambda_j \left((z-1)\gamma_0 (1+r)^{\frac{(j-1)}{2}}\right)^j }{ \sum_j \left(z\gamma_0(1+r)^{\frac{(j-1)}{2}}\right)^j   }
%\end{eqnarray}
%where $\gamma_0 = \lambda_0/\mu_0$.

If we let $\overline{MTTDL}_p$ denote the mean time to data loss using $\{\lambda_0 = \lambda_{avg}, \dots, \lambda_{n-m} = \lambda_{avg}\}$ and $\{\mu_0,\dots,\mu_{n-m-1}\}$, we will have the following mean time to data loss expression for any EPG in the storage system. Averaging over $\nu$ using Equation (\ref{DefectiveEqn}) yields
\begin{eqnarray}
MTTDL_p^{Ver.} &=& \sum_{\nu = 0}^{p} \epsilon_{n-\nu}  \overline{MTTDL}_{p-\nu} \\ &=&  \sum_{\nu = 0}^{p} \binom{n}{\nu} \rho^\nu (1-\rho)^{n-\nu}  \overline{MTTDL}_{p-\nu}
\end{eqnarray}

Finally, using the arguments of the previous section, we can obtain the mean time to data loss of the whole storage system using vertical allocations and Weibull distribution for the EPG TTF given by $MTTDL_p^{Ver.}/z^{1/k}$.

\subsection{Numerical Results}

Since the main objective of this subsection is to show the relative reliability analysis of different allocation policies, we only show performances of one dimensional MDS arrays. The same conclusions can be drawn with larger dimensional disk array systems such as protected by product codes (2-D MDS codes) \cite{GIBSON}. In larger dimensional protection groups however, allocation policies might not be so straightforward to meet our objectives to decorrelate disk failure events as much as possible.

\begin{figure}[t!]
\centering
\includegraphics[height = 70mm, width=\columnwidth]{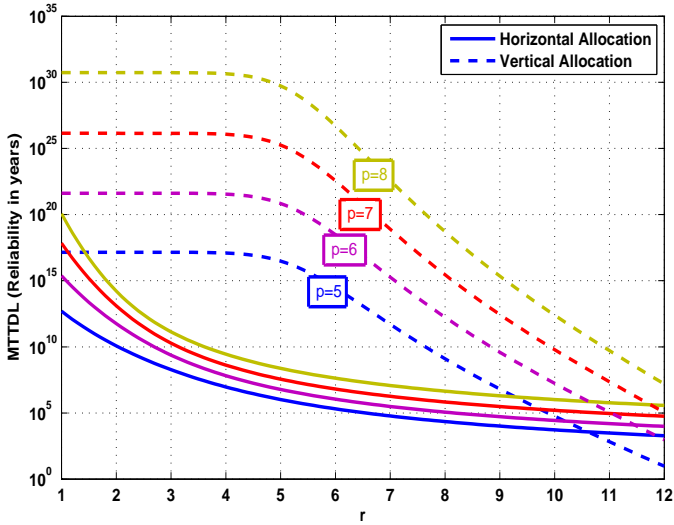}
\caption{Comparison of information dispersal methodologies: Horizontal and Vertical allocations of disks into the network storage nodes.}\label{fig:COMPARISON1}
\end{figure}

We assume $n=200$ and $z=n$ EPGs which amounts to 40000 disks.
The reliability of such a storage system will be presented using two different allocation policies in terms of MTTDL. As previously assumed, let us use disks with $\lambda_0 = 4 \times 10^{-6}$ and $\mu_0 = 4$ hours and a logistic growth with $\lambda_{max} = 3 \times 10^{-2}$. The shape parameter $k = 0.9$ is assumed for TTF Weibull distribution.  In Fig. \ref{fig:COMPARISON1}, we show the results for $p$= 5, 6, 7, 8  using two different allocation policies: \emph{Horizontal} and \emph{Vertical} Allocations. As can be seen there is a limit to the value of $r$ beyond which the failure rates become detrimental and lead the vertical allocation to give performance values below that of horizontal allocation. Of course, this is because there is no error protection in vertical direction. However for $r < 10$, the vertical allocation policy allows greater reliability than the horizontal allocation. In fact, for a range of $r$ i.e., $1 \leq r \leq  5$, MTTDL performance does not show any dramatic change with increasing $r$. This is due to the decorrelation of disk failures for large installation of 1-D MDS disk array systems. Another interesting observation is that using vertical allocation and $p=6$, the performance is almost always greater than that of horizontal allocation and $p=8$ for the $r$ of interest in the same figure. This suggests that by changing the allocation policy, we can save some parity disks and still be able to have the intended reliability for the whole disk array system in a distributed storage setting. Another way of reading the same figure is to look at the effect of adding extra parities to increase the failure tolerance of the system for a given target reliability level. We can observe that for any given reliability target, similar diminishing returns argument applies to the failure tolerance for going from $p$ parities to $p+1$ parities, though it provides different gains for horizontal and vertical allocations.

\section{Conclusion} \label{CNCLSN}

Disk replacement rates show significant correlation between constituent disks that are locally stored in the same storage network node and are found to be no where near manufactures' reported disk failure or replacement rates. Based on the recent survey data and experimental evidences, a more comprehensive and applicable modeling is needed to be able to accommodate for the dependencies in different failure modes and compensate for different data allocations. Additionally, with the advancement of new erasure codes, novel ways of storing data has evolved and has put new metrics in forefront such as data regeneration, read overhead and efficient repair. In this study, we proposed a generalized failure model that can capture realistic parameters and provide more accurate reliability estimations for MDS disk arrays. In particular, we argued that instead of adding more parity locally, it may be more convenient to disperse the data and parity across the network to be able to have disks belonging to the same protection group work relatively independently. Additionally, we have shown how the average read overhead affects the repair rates and hence the reliability of the overall storage system. Although our discussion is simplified by using 1-D MDS codes and straightforward allocation policies, the results can be generalized using the proposed model to larger dimensional and modern coding schemes in which more sophisticated allocation policies might be needed to be able to find ways for decorrelated component disk operations. This is made simple in this study by driving efficient computation of the MTTDL metric.

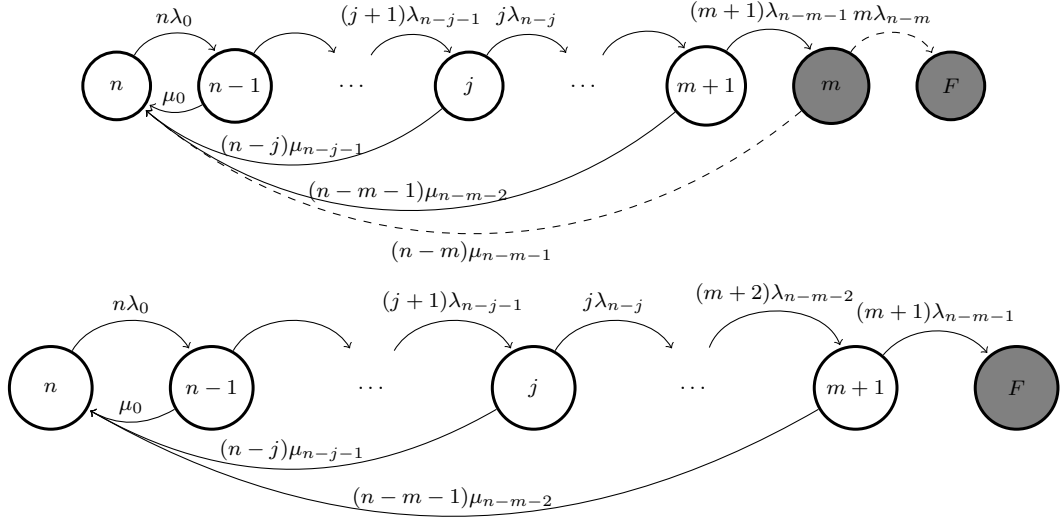
\begin{figure*}[htp!]
\centering
\begin{tikzpicture}[node distance=0.6cm]
    % Add the states
    \footnotesize
    \node[state, minimum width=0.9cm, line width=0.4mm]             (n) {$ n $};
    \node[state, minimum width=0.9cm, line width=0.4mm, right=of n] (n1) {$n-1$};
    \node[state, minimum width=0.9cm, line width=0.4mm, right=of n1, draw=none] (n2) {$\dots$};
    \node[state, minimum width=0.9cm, line width=0.4mm, right=of n2] (j) {$j$};
    \node[state, minimum width=0.9cm, line width=0.4mm, line width=0.4mm, right=of j, draw=none] (n3) {$\dots$};
    \node[state, minimum width=0.9cm, line width=0.4mm, right=of n3] (m1) {$m + 1$};
    \node[state, fill=gray, minimum width=1cm, line width=0.4mm, right=of m1] (m) {$m$};
    \node[state, fill=gray, minimum width=0.9cm, line width=0.4mm, right=of m] (f) {$F$};

    % Connect the states with arrows
    \draw[every loop]
        (n) edge[bend left=60, auto=left] node {$n\lambda_0$} (n1)
        (n1) edge[bend left, auto=right]  node {$\mu_0$} (n)
        (n1) edge[bend left=60, auto=left] node {} (n2)
        (n2) edge[bend left=60, auto=left] node {$(j+1)\lambda_{n-j-1}$} (j)
        (j) edge[bend left=60, auto=left] node {$j\lambda_{n-j}$} (n3)
        (n3) edge[bend left=60, auto=left] node {} (m1)
        (m1) edge[bend left=60, auto=left] node {$(m+1)\lambda_{n-m-1}$} (m)
        (m1) edge[bend left=40, auto=right] node {$(n-m-1)\mu_{n-m-2}$} (n)
        (m) edge[bend left=40, dashed, auto=left]  node {$(n-m)\mu_{n-m-1}$} (n)
		(j) edge[bend left=40, auto=right]  node {$(n-j)\mu_{n-j-1}$} (n)
        (m) edge[bend left=60, dashed, auto=left] node {$m\lambda_{n-m}$} (f);

\end{tikzpicture}
\begin{tikzpicture}[node distance=1cm]
    % Add the states
    \footnotesize
    \node[state, minimum width=1.1cm, line width=0.4mm]             (n) {$ n $};
    \node[state, minimum width=1.1cm, line width=0.4mm, right=of n] (n1) {$n-1$};
    \node[state, minimum width=1.1cm, line width=0.4mm, right=of n1, draw=none] (n2) {$\dots$};
    \node[state, minimum width=1.1cm, line width=0.4mm, right=of n2] (j) {$j$};
    \node[state, minimum width=1.1cm, line width=0.4mm, line width=0.4mm, right=of j, draw=none] (n3) {$\dots$};
    \node[state, minimum width=1.1cm, line width=0.4mm, right=of n3] (m1) {$m + 1$};
    \node[state, fill=gray, minimum width=1.1cm, line width=0.4mm, right=of m1] (m) {$F$};

    % Connect the states with arrows
    \draw[every loop]
        (n) edge[bend left=60, auto=left] node {$n\lambda_0$} (n1)
        (n1) edge[bend left, auto=right]  node {$\mu_0$} (n)
        (n1) edge[bend left=60, auto=left] node {} (n2)
        (n2) edge[bend left=60, auto=left] node {$(j+1)\lambda_{n-j-1}$} (j)
        (j) edge[bend left=60, auto=left] node {$j\lambda_{n-j}$} (n3)
        (n3) edge[bend left=70, auto=left] node {$(m+2)\lambda_{n-m-2}$} (m1)
        (m1) edge[bend left=50, auto=left] node {$(m+1)\lambda_{n-m-1}$} (m)
        (m1) edge[bend left=30, auto=right] node {$(n-m-1)\mu_{n-m-2}$} (n)
		(j) edge[bend left=30, auto=right]  node {$(n-j)\mu_{n-j-1}$} (n);

\end{tikzpicture}
\caption{Markov failure model for $m$ and $m+1$ data disks where the size of the EPG is fixed and equals to $n$. \emph{F}: Failure state if $m$ data disks are being utilized.}\label{fig:MARKOVproof}
\end{figure*}

\appendices
\section{Proof of Theorem \ref{thm31}} \label{app1}

For a given integer $c \leq p$, we let $\lambda_p = 0, \lambda_{p-1} = 0, \dots, \lambda_{c}=0$ and $\mu_{p-1}=0,\dots, \mu_{c-1} = 0$. Therefore using the matrix $A_{p+2}$, we will have
 \[\left[ \begin{array}{cc}
\textbf{A}_{c+1}   & \textbf{0}  \\
\textbf{0} & s\textbf{I}_{p+2-c-1} \end{array} \right] \left[ \begin{array}{c}
L_{p_{n}}(s) \\
L_{p_{n-1}}(s)\\
\vdots \\
L_{p_{n-(c-1)}}(s)\\
L_{p_{F}}(s)\\
\vdots \\
L_{p_{F}}(s) \end{array} \right] = \left[ \begin{array}{c}
1 \\
0\\
\vdots \\
0 \end{array} \right] \]

\[ \Rightarrow \textbf{A}_{c+1}  \left[ \begin{array}{c}
L_{p_{n}}(s) \\
L_{p_{n-1}}(s)\\
\vdots \\
L_{p_{n-(c-1)}}(s)\\
L_{p_{F}}(s)\\
 \end{array} \right] = \left[ \begin{array}{c}
1 \\
0\\
\vdots \\
0 \end{array} \right] \]
and $sL_{p_{F}}(s) = 0$. This argument implies that the expressions given for $A_{p+2}$ (an EPG using $m$ data disks) can be used to find expressions for $\textbf{A}_{c+1}$ with $c \leq p$ (an EPG using $n+1-c$ data disks).

For $m=1$ (single data disk)  is the trivial case, we focus on the inductive step for $m>1$. For this, we start with considering the Fig. \ref{fig:MARKOVproof}. In  \ref{fig:MARKOVproof} a. we show the generalized Markov failure model using $m$ data disks whereas \ref{fig:MARKOVproof} b. shows the same  model and same size EPG having $m+1$ data disks. In order to go from \ref{fig:MARKOVproof} a. to \ref{fig:MARKOVproof} b., it is straightforward to see that we must set $\mu_{n-m-1} = 0$. This implies if we have $m
+1$ data disks for an $n$-disk EPG, we can recover up to $p-1$ disk failures. If we have more than $p-1$ whole disk failures, no repair can put the EPG back in operation.  Also, from transform domain equations (given above)  by setting $\mu_{n-m-1} = 0$ and $s=0$, we obtain
\begin{eqnarray}
 -\lambda_{n-m-1}(m+1) L_{P_{m+1}}(0)   + \lambda_{n-m}mL_{P_{m}}(0) = P_{m}(0) = 0 \nonumber
\end{eqnarray}
which implies that $\lambda_{n-m-1}(m+1) L_{P_{m+1}}(0) = \lambda_{n-m}mL_{P_{m}}(0)$. Furthermore, we have $L_{P_{m+1}}(0) = L_{P_{m}}(0) \Leftrightarrow\lambda_{n-m-1}(m+1) = \lambda_{n-m}m$.

In the inductive step, let us assume the given expressions are true for an EPG using $m$ data disks and show that they are true for an EPG using $m+1$ data disks. First note that
for $x=0,1,\dots,p-2$, we have
\begin{eqnarray}
\prod_{\substack{j=0 \\ j \not=x}}^{p-1}\Lambda_x^{p}(j) = ((p-1)\mu_{p-2} + \lambda_{p-1}(m+1))\prod_{\substack{j=0 \\ j \not=x}}^{p-2}\Lambda_x^{p}(j)
\end{eqnarray}

It is easy to verify that we have $\phi_p(0) = m\lambda_p \phi_{p-1}(0)$ where $\phi_p(0) = \prod_{i=0}^{p}\lambda_i (m + i) $.  if we let $m \rightarrow m+1$ and $p  \rightarrow p-1$, we have $L_{P_{m}}(0) \rightarrow L_{P_{m+1}}(0)$. Finally, we have  $L_{P_{m+1}}(0) = L_{P_{m}}(0) \Rightarrow\lambda_{n-m-1}(m+1) = \lambda_{n-m}m$.
Now, using these results and the inductive assumption with $\mu_{n-m-1}=0$ and $(m+1)\lambda_{p-1} = m\lambda_p$, we finally have
\begin{eqnarray}
L_{P_{m+p-x}}(0) &=& \frac{(p\mu_{p-1} + \lambda_pm)}{\phi_p(0)}\prod_{\substack{j=0 \nonumber \\ j \not=x}}^{p-1}\Lambda_x^{p}(j) \nonumber \\
&=& \frac{1}{\phi_{p-1}(0)} ((p-1)\mu_{p-2} \nonumber \\ 
&& \ \ \ + \ \lambda_{p-1}(m+1))\prod_{\substack{j=0 \nonumber \\ j \not=x}}^{p-2}\Lambda_x^{p}(j) \\
&=& L_{P_{(m+1)+(p-1)-x}}(0)
\end{eqnarray}
where $L_{P_{(m+1)+(p-1)-x}}(0)$ is the expression obtained from the expression given for $L_{P_{m+p-x}}(0)$ by replacing $m$ with $m+1$ and $p$ with $p-1$.   This simply implies that we can use the same expression to obtain reliability values when $p \rightarrow p-1$ while $n = m + p$ is kept fixed. Now we consider if $x=p-1$. From the inductive assumption, we have
\begin{eqnarray}
L_{P_{m}}(0) &=& \frac{1}{\phi_p(0)}\prod_{i=0}^{p-1} \lambda_i (m + 1 + i) \nonumber \\ &=& \frac{1}{m \lambda_p \phi_{p-1}(0)}\prod_{i=0}^{p-1} \lambda_i (m + 1 + i) \nonumber \\
&=& \frac{1}{ \phi_{p-1}(0)}\prod_{i=0}^{p-2} \lambda_i (m + 2 + i) \label{Eqn48}
\end{eqnarray}

Since $L_{P_{m+1}}(0) = L_{P_{m}}(0) \Leftrightarrow\lambda_{p-1}(m+1) = \lambda_{p}m$, we have  the Eqn. (\ref{Eqn48}).  This completes the proof. $\blacksquare$

\section{Proof of Theorem \ref{thmB}} \label{app2}

For a fixed $n$, let us start with a explicit expression for $MTTDL_{p+1}$ (using $m-1$ data disks),
\begin{align}
\frac{((p+1)\mu_{p} + \lambda_{p+1}(m-1))}{\phi_{p+1}(0)} \sum_{x=0}^{p} \prod_{\substack{j=0 \\ j \not=x}}^{p}\Lambda_x^{p+1}(j)   + \frac{1}{\lambda_{p+1}(m-1)}. \textcolor{blue}{\label{EqnAppB0}}
\end{align}

It is easy to verify that we have
\begin{eqnarray}
\phi_{p+1}(0) = \lambda_{p+1}(m-1)\phi_{p}(0).   \label{EqnAppB1}
\end{eqnarray}

Also, we have the following algebraic manipulation to use
\begin{eqnarray}
\sum_{x=0}^{p} \prod_{\substack{j=0 \\ j \not=x}}^{p}\Lambda_x^{p+1}(j) &=& \sum_{x=0}^{p-1} \prod_{\substack{j=0 \\ j \not=x}}^{p}\Lambda_x^{p+1}(j) + \prod_{\substack{j=0 \\ j \not=p}}^{p}\Lambda_x^{p+1}(j) \nonumber \\
&=& (\lambda_p m + p \mu_{p-1}) \sum_{x=0}^{p-1} \prod_{\substack{j=0 \\ j \not=x}}^{p-1}\Lambda_x^{p}(j) \nonumber \\ && +  \prod_{j=0}^{p-1} \lambda_j (m+1+j). \label{EqnAppB2}
\end{eqnarray}

Let us plug Eqns. (\ref{EqnAppB1}) and (\ref{EqnAppB2}) into the Eqn. (\ref{EqnAppB0}), we have
\begin{eqnarray}
MTTDL_{p+1} &=& \frac{(p+1)\mu_{p} + \lambda_{p+1}(m-1)}{\lambda_{p+1}(m-1)} MTTDL_{p} \nonumber \\ && + \frac{1}{\lambda_{p+1}(m-1)}
\end{eqnarray}
as desired.  $\blacksquare$

\section{Proof of Theorem \ref{thm35} } \label{app3}

Let us assume we have $0 \leq j \leq n-m$ failed blocks. Of these failures, let us suppose that $i$ out of $j$ failures happen in the $m$ data blocks and the remaining $j-i$ failures happen in the rest of $n-m$ parity blocks. The overhead conditioned on $j$ failures depends on which $i$ data blocks are failed because $S_k^{(n)}$s might be different. Let us consider each $i$ failure combinations and for each we sum the total number of accesses. This expression is given by
\begin{eqnarray}
\sum_{s=1}^{\binom{m}{i}}\sum_{c \in C_s} |S_c^{(n)}| = \binom{m}{i} \frac{i}{m} \sum_{k=1}^m |S_k^{(n)}| = \binom{m}{i}i\overline{S}(n)
\end{eqnarray}
where $C_s$ is the set of indexes corresponding to $s$-th combination of all $\binom{m}{i}$ combinations and $\overline{S}(n) = \sum_k |S_k^{(n)}|$ is the average number of block accesses. On the other hand, for unfailed $m-i$ blocks, we have access overhead of unity. Since we sum all the different combinations, we finally have $\binom{m}{i}(m-i)$ access read overhead. Since the rest of the $j-i$ failures on the parity blocks can happen in different combinations, we have the following total number of accesses
\begin{eqnarray}
\left( \binom{m}{i}i\overline{S}(n) + \binom{m}{i}(m-i) \right)\binom{n-m}{j-i}.
\end{eqnarray}

This number shall be divided by all possible combinations $\binom{n}{j}$ multiplied by the number of data blocks $m$. Since selecting any data block is equally likely, we have division by $m$. Finally, we sum over all possible $i$ to find the unconditional average overhead which is given by Equation (\ref{eqn50}).

If we inspect the hypergeometric probability term in Equation (\ref{eqn50}), it is easy to see that we have
\begin{align}
\frac{\binom{m}{i}\binom{n-m}{j-i}}{\binom{n}{j}} = \binom{j}{i} \prod_{r=1}^i \frac{m-i+r}{n-i+r} \prod_{s=1}^{j-i} \frac{n-m-(j-i)+s}{n-j+s}
\end{align}
which yields
\begin{eqnarray}
\lim_{n \rightarrow \infty} \frac{\binom{m}{i}\binom{n-m}{j-i}}{\binom{n}{j}} = \binom{j}{i} (m/n)^i (1-m/n)^{j-i}
\end{eqnarray}
where convergence happens as $n \rightarrow \infty$ for constant $m/n$. We now can express Equation (\ref{eqn50}) as follows
\begin{align}
\lim_{n \rightarrow \infty} \Phi_j(n) = \sum_{i=0}^j \left( 1 + \frac{i(\overline{S}(n)-1)}{m}\right)  \binom{j}{i} \left(\frac{m}{n}\right)^i \left(1-\frac{m}{n}\right)^{j-i}
\end{align}
which implies $\lim_{n \rightarrow \infty} \Phi_j(n) = 1 + \frac{(\overline{S}(n)-1)}{m} j (m/n)$ from which the result follows. $\blacksquare$

% use section* for acknowledgement
%\section*{Acknowledgment}

\ifCLASSOPTIONcaptionsoff
  \newpage
\fi

% definitions and bibliography database(s)
%\bibliography{IEEEabrv,../bib/paper}
%
% <OR> manually copy in the resultant .bbl file
% set second argument of \begin to the number of references
% (used to reserve space for the reference number labels box)

%\begin{IEEEbiography}[{\includegraphics[width=1in,height=1%.25in,clip,keepaspectratio]{}}]{John Doe}
%\end{IEEEbiography}

% that's all folks
\end{document}